\documentclass[letterpaper, 10pt, conference]{IEEEtran}
\IEEEoverridecommandlockouts                              % This command is only needed if
\usepackage[T1]{fontenc}
\usepackage{newtxtext}
\usepackage{dsfont}
\usepackage{amsmath,amssymb,amsthm, amsfonts,mathtools,bm,eufrak}
\usepackage{algorithm}
\usepackage{algorithmicx}
\usepackage{algpseudocode}
\makeatletter
\let\IEEEsaved@labelindent\labelindent
\let\labelindent\@undefined
\makeatother
\usepackage{enumitem}
\makeatletter
\ifx\IEEEsaved@labelindent\@undefined\else
  \let\labelindent\IEEEsaved@labelindent
\fi
\makeatother
\usepackage{graphicx}
\usepackage{tikz}
\usepackage{subfig}
\usepackage[font=small]{caption}
\usepackage[hidelinks]{hyperref}
\usepackage{cite}
\usepackage{cleveref}
\usepackage{xcolor}
\usepackage{titlesec}
\usepackage{mathrsfs}
\newtheorem{assumption}{Assumption}
\newtheorem{theorem}{Theorem}
\newtheorem{proposition}{Proposition}
\newtheorem{corollary}{Corollary}
\newtheorem{lemma}{Lemma}
\newtheorem{remark}{Remark}

\usepackage[framemethod=TikZ]{mdframed}

\mdfdefinestyle{shadecommon}{
  userdefinedwidth=\linewidth,
  linewidth=0.1pt,
  linecolor=gray!45,
  roundcorner=8pt,        % rounds the border
  leftmargin=0pt,
  rightmargin=0pt,
  innerleftmargin=1.5pt,
  innerrightmargin=1.5pt,
  innertopmargin=0pt,
  innerbottommargin=2pt,
  skipabove=3pt,
  skipbelow=3pt
}

\newmdenv[
  style=shadecommon,
  backgroundcolor=gray!3
]{shadeboxlight}

\newmdenv[
  style=shadecommon,
  backgroundcolor=gray!7
]{shadeboxmedium}

\newmdenv[
  style=shadecommon,
  backgroundcolor=gray!12
]{shadeboxdark}
\newcommand{\R}{\mathbb{R}}

\newcommand{\N}{\mathbb{N}}

\newcommand{\Cc}{\mathcal{C}}

\newcommand{\ceq}{\coloneq}

\DeclareMathOperator*{\argmin}{arg\,min}

\newcommand{\dx}{\mathrm{d}}

\def\bs{\boldsymbol}
\def\mf{\mathbf}
\def\mbb{\mathbb}
\def\mc{\mathcal}
\def\ds{\mathds}
\def\mrm{\mathrm}

\def\mfk{\mathfrak}

\def\what{\widehat}
\def\mcr{\mathscr}

\newcommand{\paren}[1]{\left( #1 \right)}

\newcommand{\bmat}[1]{\begin{bmatrix}#1\end{bmatrix}}

\newcommand{\eqn}[1]{\begin{equation*}#1\end{equation*}}
\newcommand{\eqnn}[1]{\begin{equation}#1\end{equation}}
\makeatletter
\let\OldStatex\Statex
\renewcommand{\Statex}[1][3]{%
  \setlength\@tempdima{\algorithmicindent}%
  \OldStatex\hskip\dimexpr#1\@tempdima\relax}
\makeatother

\allowdisplaybreaks
\makeatletter
\def\ps@IEEEpreprintnotice{%
  \def\@oddhead{\hfil\footnotesize\preprintnotice\hfil}%
  \let\@evenhead\@oddhead
  \def\@oddfoot{}\let\@evenfoot\@oddfoot}
\makeatother
\newcommand{\preprintnotice}{This is a preprint version accepted for
  publication in the 2026 65th IEEE Conference on Decision and Control (CDC).}

\title{\Large \bf On Data-Driven Koopman Representations of Nonlinear Delay Differential Equations}

\author{ Santosh Mohan Rajkumar, Dibyasri Barman, Kumar Vikram Singh, and Debdipta Goswami
\thanks{This work is supported by NSF-DMS Math-DT under Grant 2529302.}
\thanks{ Santosh Mohan Rajkumar and Debdipta Goswami are with the Department of Mechanical and Aerospace Engineering, The Ohio State University, Columbus,
OH, USA [email:  {\{\texttt{rajkumar.36,goswami.78}\}@osu.edu}].
}
\thanks{
Dibyasri Barman and Kumar Vikram Singh are with the Department of Chemical, Paper, and Biomedical Engineering and the Department of Mechanical and Manufacturing Engineering, respectively, at Miami University, Oxford, OH, USA. [email: \{\texttt{barmand, singhkv}\}@miamioh.edu].
}
\thanks{Open-source Code: \href{https://github.com/santoshrajkumar/koopman-dde-kEDMD}{ \tiny \texttt{https://github.com/santoshrajkumar/koopman-dde-kEDMD}}
}
}

\begin{document}

\maketitle
\thispagestyle{IEEEpreprintnotice}

%%%%%%%%%%%%%%%%%%%%%%%%%%%%%%%%%%%%%%%%%%%%%%%%%%%%%%%%%%%%%%%%%%%%%%%%%%%%%%%%
\begin{abstract}

% This work establishes a rigorous bridge between infinite-dimensional delay dynamics and finite-dimensional Koopman learning, with explicit and interpretable error guarantees. While Koopman analysis is well developed for ordinary differential equations (ODEs) and partially for partial differential equations (PDEs), its extension to delay differential equations (DDEs) remains limited due to their infinite-dimensional phase space. We propose a finite-dimensional Koopman approximation framework based on history discretization and a suitable reconstruction operator, enabling a tractable representation of the Koopman operator via kernel-based extended dynamic mode decomposition (kEDMD). Deterministic error bounds are derived for the learned predictor, decomposing the total error into history discretization, kernel interpolation, and data-driven regression components. Furthermore, we develop a kernel-based reconstruction method to recover discretized states from lifted Koopman coordinates, with provable guarantees. Numerical results demonstrate convergence of the learned predictor with respect to both discretization resolution and data size, validating the proposed approach for reliable prediction for delay systems.

This work establishes a rigorous bridge between infinite-dimensional delay dynamics and finite-dimensional Koopman learning, with explicit and interpretable error guarantees. While Koopman analysis is well-developed for ordinary differential equations (ODEs) and partially for partial differential equations (PDEs), its extension to delay differential equations (DDEs) remains limited due to the infinite-dimensional phase space of DDEs. We propose a finite-dimensional Koopman approximation framework based on history discretization and a suitable reconstruction operator, enabling a tractable representation of the Koopman operator via kernel-based extended dynamic mode decomposition (kEDMD). Deterministic error bounds are derived for the learned predictor, decomposing the total error into contributions from history discretization, kernel interpolation, and data-driven regression. Additionally, we develop a kernel-based reconstruction method to recover discretized states from lifted Koopman coordinates, with provable guarantees. Numerical results demonstrate reliable prediction of nonlinear delay systems, with potential relevance to future control applications.

\end{abstract}

% \begin{IEEEkeywords}
% DDE, Koopman
% \end{IEEEkeywords}

%\vspace{-1.8em}
%%%%%%%%%%%%%%%%%%%%%%%%%%%%%%%%%%%%%%%%%%%%%%%%%%%%%%%%%%%%%%%%%%%%%%%%%%%%%%%%

% ---- begin sections/1_intro ----
\section{Introduction}
\noindent 
Delay differential equations (DDEs) arise naturally in engineering and scientific systems, including networked control systems, biological regulation, and mechanical systems with delayed feedback \cite{diekmann1995delay, rajkumar2020online, glass2021nonlinear, rihan2014delay}. Unlike ordinary differential equations (ODEs), DDEs evolve on an infinite-dimensional phase space consisting of history segments, which fundamentally complicates modeling, prediction, and analysis \cite{wang2023stochastic}. 

% Recent years have witnessed growing interest in data-driven modeling of nonlinear DDEs \cite{stephany2024learning,  zhang2026extended, chen2025dynamic, pecile2025data}. Adjoint-based neural DDE (NDDE) method in \cite{stephany2024learning} considers joint learning of dynamics and unknown delays, but incur high computational cost and training instability. The extended NDDE approach \cite{zhang2026extended} enhances expressivity by learning delays and initial conditions, but suffers from memory and scalability bottlenecks. Hybrid physics informed machine learning (ML) reduction method in \cite{chen2025dynamic} yields finite-dimensional models but incurs bias from basis truncation, limiting accuracy for complex nonlinear delay dynamics. Additionally, sparse identification methods, such as SINDy, applied to DDEs \cite{pecile2025data} provide interpretable models, but depend on carefully chosen libraries and struggle to capture complex delay interactions. Consequently, existing approaches face a fundamental trade-off between infinite-dimensional fidelity, computational tractability, and theoretical reliability.

Recent years have witnessed increasing interest in data-driven modeling of nonlinear DDEs \cite{stephany2024learning, zhang2026extended, chen2025dynamic, pecile2025data}. Adjoint-based neural DDE methods \cite{stephany2024learning} enable joint learning of dynamics and delays, but suffer from high computational cost and training instability due to backpropagation through delayed trajectories. Continuous-time extensions \cite{zhang2026extended} improve expressivity, yet inherit significant memory and scalability limitations. Hybrid physics–machine learning approaches \cite{chen2025dynamic} yield tractable finite-dimensional models, but introduce bias through basis truncation. Sparse identification methods such as SINDy \cite{pecile2025data} provide interpretable representations, but rely heavily on library design and struggle to capture complex delay interactions. Consequently, existing approaches face a fundamental trade-off between infinite-dimensional fidelity, computational tractability, and theoretical reliability.

In parallel, the Koopman operator framework has emerged as a powerful paradigm for data-driven analysis and control of nonlinear dynamical systems \cite{koopman1931hamiltonian,rowley2009spectral, goswami2017global, korda2018linear,goswami2021bilinearization, strasser2026overview, Nayak2024tcKAE, rajkumar2025real}. For nonlinear systems governed by ODEs, data-driven Koopman methods such as extended dynamic mode decomposition (EDMD) \cite{williams2015data} and its kernel variants (kEDMD) have been widely developed, with recent works establishing deterministic error bounds for nonlinear systems \cite{kohne2025error, bold2025kernel, strasser2025kernel}. Extensions to PDEs have also been explored \cite{nakao2020spectral, peitz2025equivariance}, but these works primarily focus on the spectral approximation of compact differential operators without error bounds suitable for reliable prediction and control design.

% Extending Koopman-based methods to DDEs remains fundamentally challenging due to the infinite-dimensional nature of the state, which consists of history functions rather than finite-dimensional vectors. Existing Koopman-based approaches to delay systems \cite{brunton2017chaos} rely on heuristic delay embeddings that compress the infinite-dimensional history into finitely many samples, without explicitly quantifying the induced approximation error. As a result, these methods lack guarantees required for reliable prediction and control.

Extending Koopman-based methods to DDEs remains challenging because the state is an infinite-dimensional history function rather than a finite-dimensional vector. Delay-coordinate Koopman formulations \cite{brunton2017chaos} use finite collections of time-delayed measurements to obtain tractable representations, but do not provide approximation error bounds associated with the finite delay-coordinate representation. For DDEs with time-varying delays, \cite{muller2017dynamical} developed an operator-theoretic framework that decomposes the solution operator into a Koopman composition operator associated with delayed access and a constant-delay evolution operator, primarily for analyzing Lyapunov spectra and related spectral properties. More recently, \cite{han2023robust} proposed a deep recurrent Koopman framework that uses an LSTM to encode sequences of past states and control inputs into probabilistic observables for prediction and robust control of nonlinear systems with unknown time delays. To the best of our knowledge, however, a quantitative bridge from infinite-dimensional DDE dynamics to finite-dimensional data-driven Koopman representations with explicit approximation error bounds has not yet been established.

This paper addresses this gap by developing a theoretically grounded, data-driven Koopman framework for nonlinear DDEs with explicit and interpretable error guarantees. The key idea is to construct a finite-dimensional realization of the infinite-dimensional delay dynamics via history discretization and a suitable reconstruction operator, enabling a principled connection between the Koopman operator on the space of history functions and a finite-dimensional data-driven representation. The main contributions are summarized as follows:
(i) \textit{Koopman realization for DDEs:} We establish a rigorous
bridge between infinite-dimensional delay dynamics and
finite-dimensional Koopman representations through a
sampling–reconstruction framework, yielding a tractable
surrogate model.
(ii) \textit{Data-driven linear surrogate:} Using kernel-based extended dynamic mode decomposition (kEDMD) with Wendland radial basis function (RBF) kernels, we construct a finite-dimensional linear surrogate model that rigorously approximates the infinite-dimensional Koopman operator defined on the Banach space of history functions through a data-driven matrix representation.
(iii) \textit{Deterministic error guarantees:} We derive explicit error bounds for the learned Koopman predictor, decomposing the total approximation error into history discretization, kernel interpolation, and data-driven regression components, providing insight into approximation accuracy.
(iv) \textit{State reconstruction with guarantees:} We develop a kernel-based mechanism to recover the finite-dimensional discretized state from lifted Koopman coordinates and establish deterministic bounds on the resulting state prediction error, supporting reliable simulation and potential control applications.

\noindent \textbf{Notations:} Let $\N:=\{1,2,\cdots\}$, $\N_0 := \{0\}\cup\N$, and $[a:b]:=[a,b] \cap \N_0$ for any $a,b \in \N_0$. Unless otherwise stated, $\|\cdot\|$ denotes the Euclidean norm for vectors in $\mathbb{R}^n$ and the induced matrix $2$-norm (spectral norm) for matrices in $\mathbb{R}^{n \times n}$, with dimensions clear from context.  For a maximal finite time-delay $\tau_d > 0 $, we denote by $\mathscr{C} \ceq \Cc \paren{[-\tau_d,0], \R^n}$
the Banach space of continuous functions $\eta:[-\tau_d,0] \to \R^n$ with the uniform norm $\| \eta \|_\infty := \sup_{s \in[-\tau_d,0]}\| \eta(s)\|$. For any $z = [z_1^\top \ \cdots \ z_M^\top]^\top \in \mathbb{R}^{nM}$ with $z_i \in \mathbb{R}^n$ for $i \in [1:M]$, define the block supremum norm as
$\|z\|_{b,\infty} := \max_{i \in [1:M]} \|z_i\|.$ For any matrix,  $\|\cdot\|_F$ denotes the Frobenius norm.
% ---- end sections/1_intro ----
% ---- begin sections/2_prelims ----
\section{Preliminaries}

\subsection{Delay Differential Equations (DDEs)}

Consider a nonlinear retarded autonomous delay differential equation (DDE) of the form \cite{hale1993introduction}
\eqnn{
\dot x(t) = f\paren{x_t},\quad t \geq 0,
\label{eq:dde1}
}
where $x(t) \in \R^n$ is the \textit{current state} and $x_t \in \mcr C$ is the \textit{history state} defined by $x_t(\theta) \ceq x(t+\theta)$, $\theta \in [-\tau_d,0]$. Let $\Omega \subset \mcr C$ be open and $f:\Omega \to \R^n$ be continuous and locally Lipschitz on $\Omega$. Then, for every initial history $\eta \in \Omega$, there exists a unique maximal forward solution $x(t;\eta)$, defined for $t \in [-\tau_d,t_\alpha{\scriptstyle(\eta)})$, with $t_\alpha{\scriptstyle(\eta)} \in (0,+\infty]$, satisfying $x(\theta;\eta)=\eta(\theta)$ for $\theta \in[-\tau_d,0]$. Consequently, $x_t(\theta;\eta)=x(t+\theta;\eta)$ is defined for $t \in [0,t_\alpha{\scriptstyle(\eta)})$ and $\theta \in [-\tau_d,0]$.

% \begin{assumption}\label{assm1}
%     Let the compact operational domain be $W \! := \! \{ \eta \in \Omega \mid \|\eta\|_\infty \leq \gamma \text{ and } 
%      \| \eta(\theta_1)-\eta(\theta_2) \| \leq L_\theta |\theta_1-\theta_2| , \forall\  \theta_1,\theta_2 \! \in \! [-\tau_d,0]\}$, for some constants $\gamma, L_\theta >0$. We assume that $W$ is forward-invariant under the semi-flow $\set{\Phi^t}_{t\geq 0}$ generated by \eqref{eq:dde1}, i.e., $\Phi^t(W) \subseteq W$, for $t \in \mbb T$,  where $\mbb T := [0,t_f]$ and $0<t_f < t_\alpha$ (c.f. \cite[Section 2.2]{hale2006functional}).
%      % Moreover, there exists a finite constant $L_F > 0$ such that $\|\Phi^t(\eta_1) - \Phi^t(\eta_2)\|_\infty \leq L_F \|\eta_1 - \eta_2\|_\infty$ for all $\eta_1, \eta_2 \in W$ and $t \in \mathbb{T}$.
% \end{assumption}

\begin{shadeboxlight}
\begin{assumption}\label{assm1}
    Let the compact operational domain be $W \! := \! \{ \eta \in \mcr C \mid \|\eta\|_\infty \leq \gamma \text{ and } 
     \| \eta(\theta_1)-\eta(\theta_2) \| \leq L_\theta |\theta_1-\theta_2| , \forall\  \theta_1,\theta_2 \! \in \! [-\tau_d,0]\}$, for some constants $\gamma, L_\theta >0$, and assume that $W \subset \Omega$.  Define
$t_W\ceq\inf_{\eta\in W}t_\alpha(\eta)$. We assume that there exists
$t_f\in(0,t_W)$ such that $W$ is forward invariant under the local
semiflow $\{\Phi^t\}_{t\geq0}$ generated by \eqref{eq:dde1} on
$\mbb T\ceq[0,t_f]$, i.e., $\Phi^t(W)\subseteq W$ for all
$t\in\mbb T$ (cf. \cite[Section 2.2]{hale1993introduction}).
\end{assumption}
\end{shadeboxlight}

\subsection{Finite-Dimensional Realization}

To obtain a finite-dimensional representation of any state function, 
let $\Theta_M := \{\theta_j\}_{j=1}^M \subset [-\tau_d,0]$, with $M\ge2$, denote a set of 
uniformly spaced discretization points with $\theta_1=-\tau_d$, $\theta_M=0$, and grid 
spacing $\delta(M) := \tau_d/(M-1)$. Define the finite-dimensional realization of $W$ as the compact operational domain $\mc Z \subset \R^{nM}$ defined as $\mc Z := \{ \zeta \in \R^{nM} \mid \| \zeta \|_{b,\infty} \le \gamma \text{ and } \|\zeta_{j+1}-\zeta_j \| \le L_\theta \delta(M),\ j=1,\cdots,(M-1) \}$, where $\zeta = [\zeta_1^\top \ \cdots \ \zeta_M^\top]^\top$ with $\zeta_j \in \mathbb{R}^n$. Define the sampling operator $\mathcal{Q}: W \to \mc Z$ as
\begin{equation}
\mathcal{Q}(\eta) :=
[\eta(\theta_1)^\top \ \eta(\theta_2)^\top \ \cdots \ \eta(\theta_M)^\top]^\top, \ \forall \ \eta \in W.
\end{equation}
Then $\mathcal{Q}(W) \subseteq \mathcal{Z}$, since every $\eta \in W$ satisfies the defining constraints of $\mathcal Z$. 
% For each $k \in \mbb T_k$, the discretized state vector is given by $z_k := \mathcal{Q}(\phi_k) \in \mathcal{Z}$.
Now, define the reconstruction operator $\mathcal{R}: \mathcal{Z} \to \mcr C$ for any $\zeta \in \mathcal{Z}$, $\mc R$ is defined as
\begin{equation}
\mathcal{R}(\zeta)(\theta) := \zeta_j + \tfrac{\theta - \theta_j}{\delta(M)} (\zeta_{j+1} - \zeta_j), \quad \forall \theta \in [\theta_j, \theta_{j+1}],
\label{eq:recon}
\end{equation}
and $j \in [1:M-1]$. Although $\mathcal{R}(\zeta)(\cdot)$ is piecewise affine in the argument $\theta$, the operator $\mathcal{R}$ is linear with respect to $\zeta$. Specifically, $\mathcal{R}$ is the restriction of a bounded linear operator defined on $\mathbb{R}^{nM}$. Therefore, $\mathcal{R}$ is Fr\'echet differentiable with derivative $D\mathcal{R}(\zeta) = \mathcal{R}$ (independent of $\zeta$) and second derivative $D^2\mathcal{R} = 0$.

\begin{shadeboxmedium}
\begin{lemma}\label{lem1} The sampling operator $\mc Q$ and the reconstruction operator $\mc R$ satisfy:
\begin{enumerate}
    \item [\textup{(i)}] $\mc R$ maps $\mathcal{Z}$ into $W$, i.e., $\mathcal{R}(\mathcal{Z}) \subseteq W$. Moreover,
    $\|\mc R(\zeta)-\mc R(\xi)\|_\infty \le \|\zeta-\xi\|_{b,\infty},
    \ \forall \zeta,\xi\in\mc Z$.  
    \item [\textup{(ii)}] The composition $\mc Q \circ \mc R$ is the identity on $\mc Z$.
    \item [\textup{(iii)}] The composition $\mc P := \mc R \circ \mc Q$ is a projection on $W$ satisfying
    $
    \|\eta - \mc P(\eta)\|_\infty \leq L_\theta \delta(M), \quad \forall \eta \in W.
    $
\end{enumerate}
\end{lemma}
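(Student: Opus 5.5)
The proof is elementary and proceeds item by item, using only the piecewise-affine form \eqref{eq:recon} of $\mc R$. On each subinterval $[\theta_j,\theta_{j+1}]$ I will write $\mc R(\zeta)(\theta)=(1-\lambda)\zeta_j+\lambda\zeta_{j+1}$ with $\lambda=(\theta-\theta_j)/\delta(M)\in[0,1]$. This convex-combination form drives all three parts.

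For (i), I will first note that $\mc R(\zeta)$ is continuous. The two affine pieces meeting at $\theta_{j+1}$ both equal $\zeta_{j+1}$ there, so $\mc R(\zeta)\in\mcr C$. The sup-norm bound then follows from convexity of the norm: $\|\mc R(\zeta)(\theta)\|\le(1-\lambda)\|\zeta_j\|+\lambda\|\zeta_{j+1}\|\le\|\zeta\|_{b,\infty}\le\gamma$.

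For the Lipschitz property, take $\theta_1'\le\theta_2'$. If both lie in the same subinterval, then $\|\mc R(\zeta)(\theta_2')-\mc R(\zeta)(\theta_1')\|=\tfrac{\theta_2'-\theta_1'}{\delta(M)}\|\zeta_{j+1}-\zeta_j\|\le L_\theta|\theta_2'-\theta_1'|$. Otherwise I will insert the intermediate grid nodes between $\theta_1'$ and $\theta_2'$ and apply the triangle inequality. Summing the per-piece bounds over consecutive pieces telescopes the lengths to $|\theta_2'-\theta_1'|$. This gluing across nodes is the only step requiring care, and it is routine. Hence $\mc R(\zeta)\in W$.

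The estimate $\|\mc R(\zeta)-\mc R(\xi)\|_\infty\le\|\zeta-\xi\|_{b,\infty}$ follows from linearity of $\mc R$ in $\zeta$. We have $\mc R(\zeta)-\mc R(\xi)=\mc R(\zeta-\xi)$ pointwise as the same convex combination, so the convexity bound above, applied to $\zeta-\xi$, gives the claim. Membership of $\zeta-\xi$ in $\mc Z$ is not needed, only the formula.

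For (ii), evaluating \eqref{eq:recon} at the nodes gives $\mc R(\zeta)(\theta_j)=\zeta_j$ for every $j$, with $\lambda=0$ and, for $j=M$, $\lambda=1$ on the last interval. Therefore $\mc Q(\mc R(\zeta))=\zeta$. This is well defined because $\mc R(\zeta)\in W$ by (i).

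For (iii), $\mc P=\mc R\circ\mc Q$ maps $W$ into $W$ by $\mc Q(W)\subseteq\mc Z$ and (i). Idempotence follows from (ii): $\mc P\circ\mc P=\mc R\circ(\mc Q\circ\mc R)\circ\mc Q=\mc P$. For the error bound, fix $\theta\in[\theta_j,\theta_{j+1}]$ and write $\eta(\theta)-\mc P(\eta)(\theta)=(1-\lambda)(\eta(\theta)-\eta(\theta_j))+\lambda(\eta(\theta)-\eta(\theta_{j+1}))$. The Lipschitz condition on $\eta\in W$ gives $\|\eta(\theta)-\mc P(\eta)(\theta)\|\le L_\theta\big[(1-\lambda)\lambda\delta+\lambda(1-\lambda)\delta\big]=2\lambda(1-\lambda)L_\theta\delta(M)\le\tfrac12L_\theta\delta(M)$. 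This is even sharper than the claimed $L_\theta\delta(M)$. Taking the supremum over $\theta$ finishes the proof.
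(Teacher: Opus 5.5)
Your proposal is correct and follows the paper's proof essentially step by step: the convex-combination form of $\mc R(\zeta)$ on each subinterval gives the bounds in (i), evaluation at the nodes gives (ii), and (iii) uses idempotence from (ii) together with the same $2\lambda(1-\lambda)L_\theta\delta(M)$ estimate. Your node-by-node telescoping argument just spells out the paper's appeal to ``continuous and piecewise affine with slope at most $L_\theta$,'' and your remark that the bound sharpens to $\tfrac12 L_\theta\delta(M)$ is valid.
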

\end{shadeboxmedium}

\begin{proof}
   See Appendix (\Cref{sec:proofs}).  
\end{proof}

\subsection{Problem Setting}

Let $\Delta \in (0,t_f]$ denote a fixed sampling interval, and assume the maximal time delay $\tau_d$ is known. The corresponding discrete-time semi-flow is given by the time-$\Delta$ map $F := \Phi^\Delta : W \to W$. Let $\mbb T_k := \{k \in \N_0 \mid (k+1)\Delta \le t_f\}$. For each \(k\in\mbb T_k\), let
\(\phi_k:=x_{k\Delta}\) and \(\phi_{k+1}:=x_{(k+1)\Delta}\). Therefore, for each $k \in \mbb T_k$, the sampled history state evolves according to
\eqnn{
\phi_{k+1} = F(\phi_k).
\label{eq:dde_samp}
}
Define the discretized history state as $z_k := \mc Q(\phi_k)$. Then, the exact evolution of $z_k$ is given by
\eqnn{
z_{k+1} = \mc Q(\phi_{k+1}) = \mc Q \circ F(\phi_k).
\label{eq:dde_disc}
}
Equation \eqref{eq:dde_disc} is exact but is not generally closed on $\mc Z$, because $z_k$ contains only finitely many samples of the history and therefore does not uniquely determine $\phi_k$. To obtain a transition map on $\mc Z$, we replace $\phi_k$
by its reconstruction $\mc R(z_k)$ and define $\tilde F := \mc Q \circ F \circ \mc R$.
Then, with $\mc P=\mc R\circ \mc Q$ and $z_k=\mc Q(\phi_k)$, we have
$\tilde F(z_k)= \mc Q \circ F \circ \mc R(\mc Q(\phi_k)) = \mc Q \circ F \circ \mc P (\phi_k)$. Since generally $\mc P\neq I$ on $W$,
this yields only an approximation of the exact discretized evolution
\eqnn{
z_{k+1} \approx \tilde F(z_k).
\label{eq:dde_ind}
}
The underlying map $F$ is unknown, and consequently, the induced map $\tilde F$ is also unknown.
We assume access only to snapshot pairs $(z_k, z_{k+1})$ corresponding to the true history state evolution pairs $(\phi_k, \phi_{k+1})$. 

%with $z  = \mc Q(\phi)$ the map becomes $\tilde F(\cdot) = \mc Q \circ F \circ \mc P(\cdot)$.

\begin{shadeboxlight}
\begin{assumption}\label{assm2}
Assume that $f$ is twice continuously Fr\'echet differentiable on an open neighborhood of $W$. Then, by smooth dependence on the initial history, the time-$\Delta$ map $F:=\Phi^\Delta$ is twice continuously Fr\'echet differentiable on a possibly smaller open neighborhood of $W$. Consequently, there exist constants $L_F, \bar D_F > 0$ such that for all $\eta,\eta_1,\eta_2 \in W$,
$\|F(\eta_1)-F(\eta_2)\|_\infty \le L_F \|\eta_1-\eta_2\|_\infty$,
and
$\|D^2F(\eta)\|_{\mathcal L^2(\mathscr C,\mathscr C)} \le \bar D_F$,
where $\mathcal L^2(\mathscr C,\mathscr C)$ denotes the space of bounded bilinear operators on $\mathscr C$, endowed with the operator norm induced by $\|\cdot\|_\infty$.
\end{assumption}
\end{shadeboxlight}

\begin{shadeboxmedium}
\begin{lemma}\label{lem:tildeF_hessian}
Under Assumptions \ref{assm1} and \ref{assm2}, the map $\widetilde F:=\mc Q\circ F\circ\mc R:\mc Z\to\mc Z$ is the restriction of a twice continuously Fr\'echet differentiable map defined on an open neighborhood of $\mc Z$. Moreover, $\|D^2\widetilde F(z)\|_{\mathcal L^2(\mathbb R^{nM},\mathbb R^{nM})}\le D_F:=\sqrt{M}\,\bar D_F$ for all $z\in\mc Z$, where $\mathcal L^2(\mathbb R^{nM},\mathbb R^{nM})$ denotes the space of bounded bilinear maps from $\mathbb R^{nM}\times\mathbb R^{nM}$ to $\mathbb R^{nM}$, endowed with the operator norm induced by the Euclidean norm.
\end{lemma}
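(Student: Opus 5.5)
The plan is to use the chain rule for Fréchet derivatives on the composition $\widetilde F=\mc Q\circ F\circ\mc R$, exploiting that both outer maps are linear. First, $\mc R$ extends to a bounded linear operator $\bar{\mc R}:\R^{nM}\to\mcr C$ given by the same formula \eqref{eq:recon}. By Lemma~\ref{lem1}(i), the estimate $\|\bar{\mc R}\zeta\|_\infty\le\|\zeta\|_{b,\infty}\le\|\zeta\|$ holds, because the interpolant is pointwise a convex combination of consecutive blocks. Similarly, $\mc Q$ extends to the bounded linear map $\bar{\mc Q}:\mcr C\to\R^{nM}$ obtained by point evaluation. Its Euclidean norm satisfies $\|\bar{\mc Q}\eta\|=(\sum_j\|\eta(\theta_j)\|^2)^{1/2}\le\sqrt M\|\eta\|_\infty$.

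By Assumption~\ref{assm2}, $F$ is $C^2$ on an open neighborhood $U\supset W$ in $\mcr C$. Set $V:=\bar{\mc R}^{-1}(U)\subset\R^{nM}$. This set is open by continuity of $\bar{\mc R}$, and it contains $\mc Z$ by Lemma~\ref{lem1}(i). On $V$ we define $G:=\bar{\mc Q}\circ F\circ\bar{\mc R}$, which agrees with $\widetilde F$ on $\mc Z$. As a composition of $C^2$ maps with bounded linear maps, $G$ is $C^2$. The chain rule, using $D\bar{\mc R}=\bar{\mc R}$ and $D^2\bar{\mc R}=0$, gives
\[
D^2G(z)[h,k]=\bar{\mc Q}\,D^2F(\bar{\mc R}z)[\bar{\mc R}h,\bar{\mc R}k],
\]
since the terms involving second derivatives of the linear outer and inner maps vanish.

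For the bound, take $z\in\mc Z$, so that $\bar{\mc R}z=\mc R(z)\in W$. Then
\[
\|D^2G(z)[h,k]\|\le\sqrt M\,\|D^2F(\mc R z)\|\,\|\bar{\mc R}h\|_\infty\|\bar{\mc R}k\|_\infty\le\sqrt M\,\bar D_F\,\|h\|\,\|k\|.
\]
Here we used $\|h\|_{b,\infty}\le\|h\|$. Taking the supremum over unit vectors $h,k$ yields $D_F=\sqrt M\,\bar D_F$.

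The main point requiring care is the extension step. One must check that the linear extension of $\mc R$ to all of $\R^{nM}$ still satisfies $\|\bar{\mc R}\zeta\|_\infty\le\|\zeta\|_{b,\infty}$ without using the constraints defining $\mc Z$. This holds because on each interval $[\theta_j,\theta_{j+1}]$ the interpolant is a convex combination of $\zeta_j$ and $\zeta_{j+1}$. One must also check that $\mc Z\subset V$, so that the Hessian is evaluated only at points $\mc R(z)\in W$, where $\bar D_F$ applies.
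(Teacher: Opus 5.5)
Your proposal is correct and follows essentially the same route as the paper. Like the paper, you extend $\mc Q$ and $\mc R$ to bounded linear maps, take $V=\overline{\mc R}^{-1}(U)$, apply the chain rule with the linear factors contributing no second-order terms, and bound using $\|\overline{\mc Q}\eta\|\le\sqrt{M}\|\eta\|_\infty$ and $\|\overline{\mc R}h\|_\infty\le\|h\|_{b,\infty}\le\|h\|$. The paper adds two small points you omit: $D^2\widetilde F(z)$ does not depend on the chosen extension, because $\mc Z$ is closed and convex with nonempty interior and hence equals the closure of its interior; and forward invariance $F(W)\subseteq W$ guarantees that the restriction really maps $\mc Z$ into $\mc Z$.
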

\end{shadeboxmedium}

\textit{Proof:} See Appendix.

% \begin{proof}
%     See Appendix (\Cref{proof:lemma1}).
% \end{proof}

We define the dataset $\mc T$ as the collection of $N = N_{tr} N_t$ transition pairs of discretized states:
\eqnn{
\mc T := \left\{ (z^{(i)}_k, z^{(i)}_{k+1}) \;\middle|\; i \in [1:N_{tr}], \; k \in [0:N_t-1] \right\},
}
where $N_{tr}$ is the number of independent trajectories and
$N_t\Delta\le t_f$. In the following, we use the shorthand $(\phi_{k+1}, z_{k+1}, \phi_k, z_k) \equiv (\phi^+, z^+, \phi, z)$. 
Given the transition dataset $\mc T$, our goal is to construct a data-driven linear surrogate model on the discretized state space $\mc Z$ that provides a finite-dimensional representation of the sampled DDE dynamics \eqref{eq:dde_samp}. This is achieved through the induced map $\tilde F$ in \eqref{eq:dde_ind}, which yields a closed approximate evolution on $\mc Z$. We further seek to establish rigorous error bounds for the learned surrogate relative to the exact discretized evolution \eqref{eq:dde_disc}.
% Given the transition dataset $\mc T$, our goal is to construct a data-driven linear surrogate model for the induced approximate evolution \eqref{eq:dde_ind} on $\mc Z$. This yields a finite-dimensional representation of the sampled DDE dynamics \eqref{eq:dde_samp} through the discretized state $z_k$. In addition, we aim to derive rigorous error bounds for the learned surrogate with respect to the exact discretized evolution \eqref{eq:dde_disc}.
% It is convenient to arrange $\mc T$ into snapshot matrices $\mathbf{Z}, \mathbf{Z}^+ \in \R^{nM \times N}$. We define these matrices by concatenating the trajectory data column-wise:
% \eqnn{
% \mathbf{Z} := \begin{bmatrix} 
% z^{(1)}_0 & \cdots & z^{(1)}_{N_t-1} & \cdots & z^{(N_{tr})}_0 & \cdots & z^{(N_{tr})}_{N_t-1} 
% \end{bmatrix},
% }
% \eqnn{
% \mathbf{Z}^+ := \begin{bmatrix} 
% z^{(1)}_1 & \cdots & z^{(1)}_{N_t} & \cdots & z^{(N_{tr})}_1 & \cdots & z^{(N_{tr})}_{N_t} 
% \end{bmatrix}.
% }
% The $j$-th column of $\mathbf{Z}$ and $\mathbf{Z}^+$ corresponds to a pair $(z_j, z_j^+)$ in the dataset, where $j \in [1: N]$. Since each $z^{(i)}_k$ is sampled from a valid history $\phi^{(i)}_k \in W$, all columns of $\mathbf{Z}$ and $\mathbf{Z}^+$ lie within the constrained set $\mc Z$.
Further, we consider a set of $p \in \N$ pairwise distinct points 
$\mbb X = \{ z_\ell \}_{\ell=1}^{p} \subset \mc Z$. Define the \textit{fill distance} of $\ds X$ over $\mc Z$ as
$h_{\mbb X} := \sup_{z \in \mc Z} \min_{z_\ell \in \ds X} \|z - z_\ell\|$. Moreover,  for each $z_\ell \in \mbb X$, we construct a local collection of $d \ge nM+1$ transition pairs 
$\mbb X_\ell = \{ (z_{\ell,j}, z_{\ell,j}^+) \}_{j=1}^d \subset \mc T$. 
Specifically, the points $\{ z_{\ell,j} \}_{j=1}^d$ are chosen as the $d$ nearest neighbors within a radius $\rho$
of $z_\ell$, and the corresponding successors 
$z_{\ell,j}^+$ are taken from the associated transition pairs in $\mc T$. For each $\ds X_\ell$ we arrange the transition pairs in matrix form as $(\bar{\mf Z}_\ell, \mf Z^+_\ell)$ with $\mf Z_\ell^+ = [z^+_{\ell,1}\ \cdots \ z^+_{\ell,d}] \in \R^{nM\times d}$, and 
\begin{align}
    \bar{\mf Z}_\ell &= \bmat{  1 & \cdots & 1 \\ z_{\ell,1}-z_\ell & \cdots & z_{\ell,d}-z_\ell } \in \R^{(nM+1)\times d}.
\end{align}

\begin{shadeboxlight}
% \begin{assumption}
% \label{assm3}
% There exists a constant $h_0 > 0$ such that the fill distance $h_{\ds X}$ satisfies $h_{\ds X} < h_0$. Moreover, for each $\ell \in [1:p]$ and the chosen $\rho$, $\mrm{rank}(\bar{\mf Z}_\ell) = nM+1$.
% \end{assumption}
\begin{assumption}
\label{assm3}
For each $\ell \in [1:p]$, the local data matrix
$\bar{\mf Z}_\ell$ has full row rank, i.e.,
$\mrm{rank}(\bar{\mf Z}_\ell)=nM+1$.
\end{assumption}
\end{shadeboxlight}

\subsection{Reproducing Kernel Hilbert Space (RKHS)}

Consider $\mrm k:\R^{nM} \times \R^{nM} \to \R$ to be a continuous strictly positive-definite symmetric kernel and define the canonical feature $\varphi_z(\cdot) = \mrm k(z,\cdot)$ for $z \in \R^{nM}$. For the distinct points $\ds X$, the kernel matrix is defined as
\begin{align}
K_{\ds X} =
\begin{bmatrix}
\mrm k(z_1,z_1) & \cdots & \mrm k(z_p,z_1)\\
\vdots & \ddots & \vdots \\
\mrm k(z_1,z_p) & \cdots & \mrm k(z_p,z_p)
\end{bmatrix}.
\end{align} 
The kernel $\mrm k$ defines a Hilbert space of functions $\mc H$ satisfying the reproducing property $g(z) = \langle g, \varphi_z \rangle_{\mc H}$ for all $g \in \mc H$ and $z \in \R^{nM}$ \cite{bold2025kernel}. The native RKHS norm $\| \cdot \|_{\mc H}$ is defined via the associated abstract inner product and is typically difficult to evaluate explicitly \cite{strasser2025kernel, wendland2004scattered}.  We consider the RKHS $\mc H$ generated by Wendland radial basis function (RBF) $\mc W_{s}:\R^{nM} \to \R$ with smoothness degree $s \in \N_0$ defined as
$\mc W_s(\xi) := \Phi_{nM,s}(\|\xi\|)$ for any $\xi \in \R^{nM}$
with $\Phi_{nM,s} \in \mc C^{2s}([0,\infty),\R)$ given as  \cite{wendland2004scattered}
\eqnn{
 \Phi_{nM,s}(r) = \begin{cases}
    P_{\mfk d, s}(r) & 0 \leq r \le 1 \\
    0, & r>1
\end{cases},
}
where $P_{\mfk d,s}$ is a univariate polynomial of degree $\mfk d = \lfloor nM/2 \rfloor+3s+1$. The induced kernel with Wendland RBF is given by $\mrm k (\zeta_1,\zeta_2) = \Phi_{nM,s}(\| \zeta_1-\zeta_2 \|)$ for $\zeta_1, \zeta_2 \in \R^{nM}$. In this work, we consider the Wendland RBF $\Phi_{nM,s} \in \mc C^{2}([0,\infty),\R)$ with $P_{\mfk d, 1}(r) = (1-r)^{\mfk d-1}[(\mfk d -1)r+1]$ with $\mfk d = \lfloor nM/2 \rfloor+4 $ and $s=1$ \cite[Corollary 9.14]{wendland2004scattered}. For Wendland kernels, the equivalence between the native RKHS norm and the fractional Sobolev norm is established via a bounded extension operator \cite{wendland2004scattered}.

% ---- end sections/2_prelims ----
% ---- begin sections/3_koopman ----
\section{Koopman Formalism and kEDMD}

Consider the space of Lipschitz continuous functionals on $W$ be defined as \cite{berninger2003lipschitz}
\eqnn{
    \mathrm{Lip}(W) :=
    \left\{
    \psi : W \to \mathbb{R}
    \;\mid\;
    \|\psi\|_{\mathrm{Lip}_W} < \infty
    \right\},
    \label{eq:norm_lipw}
}
where $\|\psi\|_{\mathrm{Lip}_W} := \|\psi\|_\infty + \mrm{Lip}_W(\psi)$, with
\eqn{
\|\psi\|_\infty := \sup_{\eta \in W} |\psi(\eta)|,\ \mrm{Lip}_W(\psi):= \sup_{\eta_1 \neq \eta_2}
    \tfrac{|\psi(\eta_1) - \psi(\eta_2)|}{\|\eta_1 - \eta_2\|_\infty}.
}
Further, let the space of Lipschitz continuous functions on $\mc Z$ defined as
\eqnn{
\mrm{Lip}(\mc Z)
:=
\left\{
\tilde{\psi}:\mc Z \to \R
\;\mid\;
\|\tilde{\psi}\|_{\mrm{Lip}(\mc Z)} < \infty
\right\}.
\label{eq:norm_lipz}
}
where $
\|\tilde{\psi}\|_{\mrm{Lip}_\mc{Z}}
:=
\|\tilde{\psi}\|_{\infty}
+
\mrm{Lip}_{\mc Z}(\tilde{\psi}),
$
with
\eqn{
\|\tilde{\psi}\|_\infty := \sup_{\zeta \in \mathcal{Z}} |\tilde{\psi}(\zeta)|,\ \mrm{Lip}_{\mc Z}(\tilde{\psi})
:=
\sup_{\zeta_1 \neq \zeta_2}
\tfrac{|\tilde{\psi}(\zeta_1)-\tilde{\psi}(\zeta_2)|}{\|\zeta_1-\zeta_2\|_{b,\infty}}.
}
From the definitions in \eqref{eq:norm_lipw} and \eqref{eq:norm_lipz}, $\mrm{Lip}(W)$ and $\mrm{Lip}(\mc Z)$ are Banach spaces, which we use as the observable spaces associated with the infinite-dimensional DDE dynamics and the finite-dimensional realization, respectively.

\begin{shadeboxmedium}
\begin{lemma}\label{lem:induced_obs}
Let $\psi \in \mrm{Lip}(W)$ and define the induced observable function
$\tilde{\psi}: \mc Z \to \R$ as $\tilde{\psi} := \psi \circ \mc R$.
Then, $\tilde{\psi} \in \mrm{Lip}(\mc Z)$ and
$
\|\tilde{\psi}\|_{\mrm{Lip}_{\mc Z}}
\le
\|\psi\|_{\mrm{Lip}_W}.
$
\end{lemma}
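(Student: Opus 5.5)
The plan is to bound the two components of $\|\tilde\psi\|_{\mrm{Lip}_{\mc Z}}$ separately, using only \Cref{lem1}(i).

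\textbf{Well-definedness and sup-norm.} By \Cref{lem1}(i), $\mc R(\mc Z)\subseteq W$. Hence $\tilde\psi=\psi\circ\mc R$ is a well-defined real-valued function on $\mc Z$. For every $\zeta\in\mc Z$ we have $|\tilde\psi(\zeta)|=|\psi(\mc R(\zeta))|\le \sup_{\eta\in W}|\psi(\eta)|$, and therefore
\[
\sup_{\zeta\in\mc Z}|\tilde\psi(\zeta)|\le\|\psi\|_\infty .
\]

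\textbf{Lipschitz seminorm.} Fix $\zeta_1\neq\zeta_2$ in $\mc Z$. If $\mc R(\zeta_1)=\mc R(\zeta_2)$, the difference quotient is zero and there is nothing to show. (In fact $\mc R$ is injective by \Cref{lem1}(ii), so this case does not occur.) Otherwise, since both images lie in $W$, the definition of $\mrm{Lip}_W(\psi)$ gives
\[
|\tilde\psi(\zeta_1)-\tilde\psi(\zeta_2)|\le \mrm{Lip}_W(\psi)\,\|\mc R(\zeta_1)-\mc R(\zeta_2)\|_\infty .
\]
The nonexpansiveness in \Cref{lem1}(i) then yields
\[
|\tilde\psi(\zeta_1)-\tilde\psi(\zeta_2)|\le \mrm{Lip}_W(\psi)\,\|\zeta_1-\zeta_2\|_{b,\infty}.
\]
Dividing by $\|\zeta_1-\zeta_2\|_{b,\infty}$ and taking the supremum gives $\mrm{Lip}_{\mc Z}(\tilde\psi)\le\mrm{Lip}_W(\psi)$.

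\textbf{Conclusion.} Adding the two bounds gives $\|\tilde\psi\|_{\mrm{Lip}_{\mc Z}}\le\|\psi\|_{\mrm{Lip}_W}<\infty$, so $\tilde\psi\in\mrm{Lip}(\mc Z)$.

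There is no real obstacle here. The only substantive input is the nonexpansive estimate $\|\mc R(\zeta)-\mc R(\xi)\|_\infty\le\|\zeta-\xi\|_{b,\infty}$ together with $\mc R(\mc Z)\subseteq W$, and both are taken from \Cref{lem1}(i). If one wanted to verify that estimate directly, the argument is short. On each interval $[\theta_j,\theta_{j+1}]$, the difference $\mc R(\zeta)(\theta)-\mc R(\xi)(\theta)$ is a convex combination of $\zeta_j-\xi_j$ and $\zeta_{j+1}-\xi_{j+1}$. By the triangle inequality, its norm is therefore at most $\max_i\|\zeta_i-\xi_i\|$.
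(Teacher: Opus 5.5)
Your proof is correct and follows the paper's argument essentially line for line. Both use $\mc R(\mc Z)\subseteq W$ for well-definedness and the sup-norm bound, then the nonexpansive estimate of \Cref{lem1}(i) for the Lipschitz seminorm, and then add the two bounds. Your remark on the case $\mc R(\zeta_1)=\mc R(\zeta_2)$ (which cannot occur because $\mc Q\circ\mc R$ is the identity on $\mc Z$) is extra care the paper omits, but it does not change the argument.
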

\end{shadeboxmedium}

\begin{proof}
Since $\mc R(\mc Z)\subseteq W$ by Lemma~1(i), the composition
$
\tilde{\psi} = \psi \circ \mc R
$
is well defined on $\mc Z$.
Now,
\begin{align*}
\|\tilde{\psi}\|_\infty
=
\sup_{\zeta\in\mc Z} |\psi(\mc R(\zeta))|
& =
\sup_{\eta\in \mc R(\mc Z)} |\psi(\eta)|\\
&\le
\sup_{\eta\in W} |\psi(\eta)|
=
\|\psi\|_\infty.
\end{align*}
Next, let $\zeta,\xi \in \mc Z$ with $\zeta \neq \xi$. Then, using \Cref{lem1}(i)
\begin{align*}
|\tilde{\psi}(\zeta)-\tilde{\psi}(\xi)|
&=
|\psi(\mc R(\zeta))-\psi(\mc R(\xi))|\\
&\le
\mrm{Lip}_W(\psi)\,\|\mc R(\zeta)-\mc R(\xi)\|_\infty\\
&\le
\mrm{Lip}_W(\psi)\,\|\zeta-\xi\|_{b,\infty}.
\end{align*}
Dividing by $\|\zeta-\xi\|_{b,\infty}$ and taking the supremum over all $\zeta \neq \xi$ gives
$
\mrm{Lip}_{\mc Z}(\tilde{\psi}) \le \mrm{Lip}_W(\psi).
$
Therefore, $\tilde{\psi} \in \mrm{Lip}(\mc Z)$ and
$ \|\tilde{\psi}\|_{\mrm{Lip}_{\mc Z}}
=
\|\tilde{\psi}\|_\infty + \mrm{Lip}_{\mc Z}(\tilde{\psi})
 \le
\|\psi\|_\infty + \mrm{Lip}_W(\psi)  = \|\psi\|_{\mrm{Lip}_W}$.
\end{proof}

\begin{remark}
Let $\mc H_{\mc Z}:=\{g|_{\mc Z}\mid g\in\mc H\}$. In the subsequent analysis,
we write $\mc H$ for $\mc H_{\mc Z}$ whenever the domain is $\mc Z$.
\end{remark}

% For any observable functional $\psi \in \mrm{Lip}(W)$, consider the corresponding induced observable $\Psi \in \mrm{Lip}(\mc Z)$ given as
% \eqnn{
% \Psi := \psi \circ \mc R,
% }
% that is,
% $\Psi(\zeta)=\psi(\mc R(\zeta)), \ \forall \zeta\in\mc Z
% $. Thus, $\Psi$ evaluates the original functional $\psi$ on the reconstructed state corresponding to the discretized state $\zeta$. Consequently, for any $\eta \in W$,
% \eqnn{
% \Psi(\mc Q(\eta))=\psi\left(\mc R\big(\mc Q(\eta)\right)\big)=\psi(\mc P(\eta)).
% }
% Hence, if $\psi \in \mrm{Lip}(W)$, then by Lemma 1(iii),
% \begin{align}
% |\psi(\eta)-\Psi(\mc Q(\eta))|
% & =
% |\psi(\eta)-\psi(\mc P(\eta))|\\
% &\le
% \mrm{Lip}_W(\psi)\|\eta-\mc P(\eta)\|_\infty\\
% &\le
% \mrm{Lip}_W(\psi)L_\theta \delta(M),
% \ \forall \eta\in W.
% \end{align}

\begin{shadeboxmedium}
\begin{lemma}\label{lem:rkhs_lip}
The RKHS $\mc H$ is continuously embedded in $\mrm{Lip}(\mc Z)$. In particular,
there exists a constant $C_{\mc H}>0$ such that
$
\|g\|_{\mrm{Lip}(\mc Z)}
\le C_{\mc H}\|g\|_{\mc H},
\ \forall g\in\mc H.
$
\end{lemma}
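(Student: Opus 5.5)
To prove the embedding, I would bound the sup-norm part and the Lipschitz part of $\|g\|_{\mrm{Lip}(\mc Z)}$ separately, each by a constant times $\|g\|_{\mc H}$, using only the reproducing property and the $\mc C^2$ regularity of the Wendland kernel $\mrm k(\zeta_1,\zeta_2)=\Phi_{nM,1}(\|\zeta_1-\zeta_2\|)$.

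\emph{Sup-norm part.} Take $g\in\mc H$ and $z\in\mc Z$. By the reproducing property and Cauchy--Schwarz,
\[
|g(z)| = |\langle g,\varphi_z\rangle_{\mc H}| \le \|g\|_{\mc H}\sqrt{\mrm k(z,z)} = \|g\|_{\mc H}\sqrt{\Phi_{nM,1}(0)}.
\]
Since $P_{\mfk d,1}(0)=1$, this gives $\|g\|_\infty\le \|g\|_{\mc H}$.

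\emph{Lipschitz part.} For $\zeta,\xi\in\mc Z$, the reproducing property gives
\[
|g(\zeta)-g(\xi)| \le \|g\|_{\mc H}\,\|\varphi_\zeta-\varphi_\xi\|_{\mc H},
\qquad
\|\varphi_\zeta-\varphi_\xi\|_{\mc H}^2 = 2\bigl(\Phi_{nM,1}(0)-\Phi_{nM,1}(\|\zeta-\xi\|)\bigr).
\]
I then need a bound of the form $\Phi_{nM,1}(0)-\Phi_{nM,1}(r)\le c\,r^2$. Two facts give it. First, $\Phi_{nM,1}\in\mc C^2([0,\infty))$. Second, the radial profile has zero slope at the origin: $\Phi'_{nM,1}(0)=0$, which follows because $\mc W_1$ is $\mc C^2$ on $\R^{nM}$ (equivalently, one can differentiate $P_{\mfk d,1}$ directly). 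Taylor's theorem then yields $c=\tfrac12\sup_{r\in[0,1]}|\Phi''_{nM,1}(r)|$. For $r>1$ the left side is at most $1\le r^2$, so the bound holds for all $r\ge 0$ with $c$ replaced by $\max\{c,1\}$.

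Combining these, $|g(\zeta)-g(\xi)|\le \sqrt{2c}\,\|g\|_{\mc H}\,\|\zeta-\xi\|$. To pass to the block supremum norm, note that $\|\zeta-\xi\|\le\sqrt{M}\,\|\zeta-\xi\|_{b,\infty}$. Hence $\mrm{Lip}_{\mc Z}(g)\le \sqrt{2cM}\,\|g\|_{\mc H}$.

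\emph{Conclusion.} Adding the two estimates gives the claim with $C_{\mc H}=1+\sqrt{2cM}$. Linearity of the map $g\mapsto g|_{\mc Z}$ then upgrades this bound to a continuous embedding. Restricting to $\mc Z$ (the Remark's $\mc H_{\mc Z}$, with the quotient norm) only decreases norms, so nothing changes there.

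The main obstacle is the quadratic bound $\Phi_{nM,1}(0)-\Phi_{nM,1}(r)=O(r^2)$, specifically verifying $\Phi'_{nM,1}(0)=0$. I would check it explicitly: with $\mfk d$ as in the paper,
\[
\frac{d}{dr}\Bigl[(1-r)^{\mfk d-1}\bigl((\mfk d-1)r+1\bigr)\Bigr]\Big|_{r=0} = -(\mfk d-1)+(\mfk d-1) = 0.
\]
As a fallback, one can invoke the general fact that an RKHS whose kernel is $\mc C^{2}$ (i.e.\ $\mc C^{1,1}$ in each variable) embeds into $\mc C^1$ on compact sets, with derivative bounds given by $\sup\|\partial_1\partial_2\mrm k\|$ (Wendland, Thm.~10.45). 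Since $\mc Z$ is convex, the mean value theorem then gives the Lipschitz bound.
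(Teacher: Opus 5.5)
Your proof is correct and follows the same route as the paper. Both arguments bound the sup-norm part by the reproducing property and Cauchy--Schwarz, and the Lipschitz part by the Lipschitz continuity of the feature map $\zeta\mapsto\varphi_\zeta$, with $C_{\mc H}=C_0+C_1$. Where the paper simply asserts that Lipschitz continuity from the $\mc C^2$ regularity of the kernel, you prove it explicitly: you use $\|\varphi_\zeta-\varphi_\xi\|_{\mc H}^2=2\bigl(\Phi_{nM,1}(0)-\Phi_{nM,1}(\|\zeta-\xi\|)\bigr)$ together with $\Phi_{nM,1}'(0)=0$, and you carry out the Euclidean-to-block-supremum conversion, which gives concrete constants $C_0=1$ and $C_1=\sqrt{2cM}$.
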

\end{shadeboxmedium}

\begin{proof}
Let $g\in\mc H$. Since $\mc H$
is an RKHS with reproducing kernel $\mrm k$, for any $\zeta\in\mc Z$,
$
g(\zeta)
=
\langle g,k(\zeta,\cdot)\rangle_{\mc H}.
$
Hence,
$
|g(\zeta)|
\le
\|g\|_{\mc H}\|\mrm k(\zeta,\cdot)\|_{\mc H}
=
\|g\|_{\mc H}\sqrt{\mrm k(\zeta,\zeta)}.
$
Since $\mrm k$ is continuous and $\mc Z$ is compact,
$
C_0
:=
\sup_{\zeta\in\mc Z}\sqrt{k(\zeta,\zeta)}
<\infty,
$
and therefore
$
\|g\|_\infty
\le
C_0\|g\|_{\mc H}.
$
Next, for any $\zeta_1,\zeta_2\in\mc Z$,
\begin{align*}
|g(\zeta_1)-g(\zeta_2)|
& =
\left|
\langle
g,
k(\zeta_1,\cdot)-k(\zeta_2,\cdot)
\rangle_{\mc H}
\right|\\
& \le
\|g\|_{\mc H}
\|k(\zeta_1,\cdot)-k(\zeta_2,\cdot)\|_{\mc H}.
\end{align*}
Since $\mrm k$ is $\mc C^2$, the feature map
$\zeta\mapsto \mrm k(\zeta,\cdot)$ is Lipschitz on the compact set
$\mc Z$. Thus, there exists a constant $C_1>0$ such that
\[
\|k(\zeta_1,\cdot)-k(\zeta_2,\cdot)\|_{\mc H}
\le
C_1\|\zeta_1-\zeta_2\|_{b,\infty}.
\]
Consequently,
$
\mrm{Lip}_{\mc Z}(g)
\le
C_1\|g\|_{\mc H}.
$
Therefore, 
\[
\|g\|_{\mrm{Lip}(\mc Z)}
=
\|g\|_\infty+\mrm{Lip}_{\mc Z}(g)
\le
(C_0+C_1)\|g\|_{\mc H}.
\]
Setting $C_{\mc H}:=C_0+C_1$ gives the stated result.
\end{proof}

% \begin{proof}
% Let $g\in\mc H$ and denote its restriction to $\mc Z$ by
% $g|_{\mc Z}$. Since $\mc H$ is an RKHS with reproducing kernel $\mrm k$, for any $\zeta\in\mc Z$,\[ g|_{\mc Z}(\zeta)=g(\zeta) = \langle g,k(\zeta,\cdot)\rangle_{\mc H}. \] . Hence,
% \[
% |g|_{\mc Z}(\zeta)| \le \|g\|_{\mc H}\|k(\zeta,\cdot)\|_{\mc H} = \|g\|_{\mc H}\sqrt{k(\zeta,\zeta)}.
% \]
% Since $\mrm k$ is continuous and $\mc Z$ is compact, there exists
% \[
% C_0:=\sup_{\zeta\in\mc Z}\sqrt{\mrm k(\zeta,\zeta)}<\infty, \text{ so that}
% \]
% $\|\Psi\|_\infty \le C_0\|\Psi\|_{\mc H}$. Next, for any $\zeta_1,\zeta_2\in\mc Z$, the reproducing property gives
% \begin{align*}
% |\Psi(\zeta_1)-\Psi(\zeta_2)|
% &=
% |\langle \Psi,\mrm k(\zeta_1,\cdot)-\mrm k(\zeta_2,\cdot)\rangle_{\mc H}|\\
% &\le
% \|\Psi\|_{\mc H}\,
% \|\mrm k(\zeta_1,\cdot)-\mrm k(\zeta_2,\cdot)\|_{\mc H}.
% \end{align*}
% Since the kernel is $\mc C^2$, the feature map $\zeta\mapsto \mrm k(\zeta,\cdot)$ is Lipschitz on
% the compact set $\mc Z$, and thus there exists $C_1>0$ such that
% \[
% \|\mrm k(\zeta_1,\cdot)-\mrm k(\zeta_2,\cdot)\|_{\mc H}
% \le C_1\|\zeta_1-\zeta_2\|_{b,\infty},
% \qquad \forall \zeta_1,\zeta_2\in\mc Z.
% \]
% Therefore,
% $
% |\Psi(\zeta_1)-\Psi(\zeta_2)|
% \le C_1\|\Psi\|_{\mc H}\|\zeta_1-\zeta_2\|_{b,\infty},
% $
% which gives
% $
% \mrm{Lip}_{\mc Z}(\Psi)\le C_1\|\Psi\|_{\mc H}.
% $
% Combining the above estimates yields
% $
% \|\Psi\|_{\mrm{Lip}_{\mc Z}}
% =
% \|\Psi\|_\infty+\mrm{Lip}_{\mc Z}(\Psi)
% \le C_{\mc H}\|\Psi\|_{\mc H},
% $
% where $C_{\mc H}:=C_0+C_1$. Therefore, $\mc H \hookrightarrow \mrm{Lip}(\mc Z)$.
% \end{proof}

\begin{shadeboxmedium}
\begin{corollary}\label{cor1}
For any $g\in\mc H$, the induced functional
$\widehat{g}:=g\circ\mc Q:W\to\R$ belongs to
$\mrm{Lip}(W)$ and satisfies
$
\|\widehat{g}\|_{\mrm{Lip}(W)}
\le
\|g\|_{\mrm{Lip}(\mc Z)}
\le
C_{\mc H}\|g\|_{\mc H}.
$
\end{corollary}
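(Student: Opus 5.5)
The argument mirrors the proof of \Cref{lem:induced_obs}, with the roles of $\mc Q$ and $\mc R$ interchanged, and then appeals to \Cref{lem:rkhs_lip}. First, $\widehat g = g\circ\mc Q$ is well defined on $W$, since $\mc Q(W)\subseteq\mc Z$ (stated right after the definition of $\mc Q$) and $g$ is defined on $\mc Z$. Here $g$ means its restriction to $\mc Z$, as in the Remark.

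For the sup-norm part,
\[
\|\widehat g\|_\infty=\sup_{\eta\in W}|g(\mc Q(\eta))| = \sup_{\zeta\in\mc Q(W)}|g(\zeta)|\le \sup_{\zeta\in\mc Z}|g(\zeta)|=\|g\|_\infty .
\]
For the Lipschitz seminorm, the key fact is that $\mc Q$ is $1$-Lipschitz from $(W,\|\cdot\|_\infty)$ to $(\mc Z,\|\cdot\|_{b,\infty})$. This holds because, for $\eta_1,\eta_2\in W$,
\[
\|\mc Q(\eta_1)-\mc Q(\eta_2)\|_{b,\infty}=\max_{j\in[1:M]}\|\eta_1(\theta_j)-\eta_2(\theta_j)\|\le \sup_{\theta\in[-\tau_d,0]}\|\eta_1(\theta)-\eta_2(\theta)\|=\|\eta_1-\eta_2\|_\infty .
\]

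Now take $\eta_1\neq\eta_2$ in $W$. If $\mc Q(\eta_1)=\mc Q(\eta_2)$, the difference quotient of $\widehat g$ is zero. Otherwise,
\[
|\widehat g(\eta_1)-\widehat g(\eta_2)|\le \mrm{Lip}_{\mc Z}(g)\,\|\mc Q(\eta_1)-\mc Q(\eta_2)\|_{b,\infty}\le \mrm{Lip}_{\mc Z}(g)\,\|\eta_1-\eta_2\|_\infty .
\]
Taking the supremum over $\eta_1\neq\eta_2$ gives $\mrm{Lip}_W(\widehat g)\le\mrm{Lip}_{\mc Z}(g)$. Adding this to the sup-norm bound yields $\|\widehat g\|_{\mrm{Lip}(W)}\le\|g\|_{\mrm{Lip}(\mc Z)}$.

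The second inequality, $\|g\|_{\mrm{Lip}(\mc Z)}\le C_{\mc H}\|g\|_{\mc H}$, is exactly \Cref{lem:rkhs_lip}. Finiteness of these norms shows $\widehat g\in\mrm{Lip}(W)$.

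There is no real obstacle. The only point needing care is the non-injectivity of $\mc Q$: distinct histories can share samples. This is handled by the case split above, or simply by noting that the bound $|\widehat g(\eta_1)-\widehat g(\eta_2)|\le\mrm{Lip}_{\mc Z}(g)\|\mc Q(\eta_1)-\mc Q(\eta_2)\|_{b,\infty}$ holds trivially when the samples coincide.
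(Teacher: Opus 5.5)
Your proof is correct and follows the paper's argument essentially step for step. Like the paper, you bound the sup-norm using $\mc Q(W)\subseteq\mc Z$, use that $\mc Q$ is $1$-Lipschitz from $(W,\|\cdot\|_\infty)$ to $(\mc Z,\|\cdot\|_{b,\infty})$ to get $\mrm{Lip}_W(\widehat g)\le\mrm{Lip}_{\mc Z}(g)$, and then invoke \Cref{lem:rkhs_lip}; your explicit proof of the $1$-Lipschitz property and your remark on the non-injectivity of $\mc Q$ only spell out details the paper leaves implicit.
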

\end{shadeboxmedium}

\begin{proof}
For any distinct $\eta_1,\eta_2\in W$, 
$|\widehat{g}(\eta_1)-\widehat{g}(\eta_2)|
=|g(\mc Q(\eta_1))-g(\mc Q(\eta_2))|
\le \mrm{Lip}_{\mc Z}(g)\|\mc Q(\eta_1)-\mc Q(\eta_2)\|_{b,\infty}
\le \mrm{Lip}_{\mc Z}(g)\|\eta_1-\eta_2\|_\infty$.
Dividing by $\|\eta_1-\eta_2\|_\infty$ and taking the supremum over $\eta_1\neq\eta_2$ yields $\mrm{Lip}_{W}(\widehat{g})\le \mrm{Lip}_{\mc Z}(g)$. Moreover, $\|\widehat{g}\|_\infty\le \|g\|_\infty$, and hence $\|\widehat{g}\|_{\mrm{Lip}(W)}\le \|g\|_{\mrm{Lip}(\mc Z)}$, so $\widehat{g}\in\mrm{Lip}(W)$. Finally, by \Cref{lem:rkhs_lip}, $\|\widehat{g}\|_{\mrm{Lip}(W)}\le \|g\|_{\mrm{Lip}(\mc Z)}\le C_{\mc H}\|g\|_{\mc H}$.
\end{proof}

\subsection{Koopman Operator}

The Koopman operator is a linear but infinite-dimensional operator that enables the study of nonlinear dynamical systems through its action on observables defined on a Banach space \cite{strasser2026overview}. The action of the Koopman operator $\mc K$ on any $\psi \in \mrm{Lip}(W)$ is given by
\begin{align}
(\mc K \psi)(\phi) = \psi( F(\phi)) = \psi(\phi^+).
\end{align}
Suppose that $\tilde \psi\in\mrm{Lip}(\mc Z)$ is induced by
$\psi\in\mrm{Lip}(W)$ via $\tilde \psi=\psi\circ\mc R$. Then,
the action of the induced Koopman operator $\tilde{\mc K}$ is given by
\begin{align*}
(\tilde{\mc K}\tilde \psi)(z)
=
\tilde \psi(\tilde F(z)). %= \psi\bigl(\mc P \circ F \circ \mc P(\phi)\bigr).
\end{align*}
For $\mbb X=\{z_\ell\}_{\ell=1}^p$ and each $\ell \in [1:p]$, define $\Psi_\ell:\mc Z\to\R$ by
$\Psi_\ell(\cdot):=\varphi_{z_\ell}(\cdot)$, and let
$
V_{\mbb X}:=\mrm{span}\{\Psi_1,\ldots,\Psi_p\}\subset \mc H.
$
Define the kernel interpolation operator
$\mc I_{\mbb X}:\mrm{Lip}(\mc Z) \! \to \!  V_{\mbb X}$ by
\[
(\mc I_{\mbb X}g)(\cdot)
:=
{\scriptstyle \sum}_{\ell=1}^p
\bigl(K_{\mbb X}^{-1}g_{\mbb X}\bigr)_\ell
\Psi_\ell(\cdot),
\  
g_{\mbb X}:=
[g(z_1)\cdots g(z_p)]^\top,
\]
for any $g\in\mrm{Lip}(\mc Z)$.
Since $V_{\mbb X}\subset\mc H\hookrightarrow\mrm{Lip}(\mc Z)$,
for any $g\in V_{\mbb X}$, the action of $\tilde{\mc K}$ can be
approximated by kernel interpolation as \cite{kohne2025error}
\eqnn{
(\tilde{\mc K}g)(z)
\approx
(\mc I_{\mbb X}\tilde{\mc K}g)(z)
=
{\scriptstyle \sum}_{\ell=1}^p
\bigl(\tilde K g_{\mbb X}\bigr)_\ell
\Psi_\ell(z),
\label{eq:koop_ev}
}
where $\tilde{ K} = K_{\mbb X}^{-1}K_{\tilde F(\mbb X)}K_{\mbb X}^{-1}$, and
\begin{align}
K_{\tilde F(\ds X)} =
\begin{bmatrix}
\mrm k(z_1,\tilde F(z_1)) & \cdots & \mrm k(z_p,\tilde F(z_1))\\
\vdots & \ddots & \vdots \\
\mrm k(z_1,\tilde F(z_p)) & \cdots & \mrm k(z_p,\tilde F(z_p))
\end{bmatrix}.
\label{eq:K_F}
\end{align}

\subsection{Data Driven Surrogate using kEDMD}\label{sec:kedmd}

Consider each data point $z_\ell \in \ds X$ and the corresponding local dataset $\ds X_\ell$. To estimate the nonlinear dynamics locally, we solve the affine regression problem
\eqnn{
\Phi_\ell
=
\argmin_{\Phi \in \R^{nM \times (nM+1)}}
\| \mf Z_\ell^+ - \Phi \bar{\mf Z}_\ell \|_F,
\label{eq:edmd_ls}
}
where $\Phi_\ell = [\hat F_\ell \ \ \hat B_\ell]$, with
$\hat F_\ell \in \R^{nM \times 1}$ and $\hat B_\ell \in \R^{nM \times nM}$.
The local regression \eqref{eq:edmd_ls} is used to estimate $\tilde F(z_\ell)$, with the discrepancy between the true successors $z_{\ell,j}^+$ and $\tilde F(z_{\ell,j})$ accounted for in the subsequent error analysis.
The regression problem \eqref{eq:edmd_ls} is well-posed under \Cref{assm3}. Using the estimates $\hat F_\ell \approx \tilde F(z_\ell)$ for each $z_\ell \in \ds X$, we construct a linear surrogate model based on kEDMD. Consider the lifted state
\eqnn{
\bs \Psi(z) = [\Psi_1(z) \ \cdots \ \Psi_p(z)]^\top,
}
where $\Psi_\ell(\cdot)=\varphi_{z_\ell}(\cdot)$ for $\ell \in [1:p]$. Then, from \eqref{eq:koop_ev}, the induced Koopman evolution admits the approximation
\eqnn{
(\tilde{\mc K}\bs\Psi)(z) \approx \mf A \bs\Psi(z), \text{where } \mf A = K_{\hat F(\ds X)}^\top K_{\ds X}^{-1},
\label{eq:koop_app}
}
and $K_{\hat F(\ds X)}$ is constructed using the estimates $\hat F_\ell$ of $\tilde F(z_\ell)$ for $\ell \in [1:p]$, analogously to \eqref{eq:K_F}. Using the Koopman approximation \eqref{eq:koop_app}, the lifted state evolution is approximated by the data-driven surrogate model
\eqnn{
\bs \Psi(z^+) \approx \mf A \bs \Psi(z).
\label{eq:surr1}
}
Next, we state the main theorem, which provides a bound on the approximation error of the data-driven linear surrogate \eqref{eq:surr1}.

\begin{shadeboxdark}
\begin{theorem}\label{thm1}
Let $M \ge 2$ be fixed, and suppose Assumptions \ref{assm1}, \ref{assm2}, and \ref{assm3} hold. For any sampled history state $\phi \! \! \in \!  \! W$ with $z\!=\! \mc Q(\phi)$ and exact successor $z^+ \! =\! \mc Q(F(\phi))$, the exact discretized evolution \eqref{eq:dde_disc} admits the lifted surrogate representation
 \eqnn{
 \bs \Psi(z^+) = \mf A \bs \Psi(z) + \mf r_\phi,
  \label{eq:surr}
 }
 where the residual satisfies 
 \eqnn{ \label{eq:det_bound}
 \| \mf r_\phi\| \leq  R_{\mrm{se}} \delta(M) + R_i h_{\mbb X} + R_d \ \rho^2,
 }
 with $R_{\mrm{se}} = R_s+R_e$, and
 \begin{subequations}
 \begin{align}
   R_s &= C_1 L_F L_\theta \|\bs \Psi\|_\mc{H},\
    R_i = B_\mbb{X} \|\bs \Psi \|_\mc{H},\\
    B_\mbb{X} & = C_1(L_F+ \sqrt{p} C_0 \| K_{\mbb X}^{-1}\| \| \bs \Psi\|_\mc{H})\\
    R_d &=  C_0 \sqrt{ d}\ p \ L_{\mrm k}\bar Z \tfrac{D_F}{2} \| \bs \Psi\|_\mc{H} \| K_{\ds X}^{-1} \|,\\
    R_e &=  C_0 \sqrt{ dM} p L_{\mrm k} L_F L_\theta\bar Z \|\bs \Psi\|_\mc{H} \| K_{\ds X}^{-1} \|. 
 \end{align}
 \end{subequations}
Here, $\|\bs\Psi\|_{\mc H}
:=
\left(
{\scriptstyle \sum}_{\ell=1}^p
\|\Psi_\ell\|_{\mc H}^2
\right)^{1/2}$, $\bar Z = \max_{\ell \in [1:p]} \| \bar{\mf Z}_\ell^\dagger \|$, $C_0$ and $C_1$ are the constants defined in the proof of \Cref{lem:rkhs_lip}, and $L_{\mrm k}>0$ depends only on the kernel function.
\end{theorem}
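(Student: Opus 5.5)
We define the residual as $\mf r_\phi := \bs\Psi(z^+) - \mf A\bs\Psi(z)$ and split it by a telescoping argument into three pieces: a discretization error, a kernel interpolation error, and a data-driven regression error. Write
\begin{align*}
\mf r_\phi &= \underbrace{\bs\Psi(\mc Q F(\phi)) - \bs\Psi(\tilde F(z))}_{\mf e_s}
+ \underbrace{\bs\Psi(\tilde F(z)) - K_{\tilde F(\mbb X)}^\top K_{\mbb X}^{-1}\bs\Psi(z)}_{\mf e_i}\\
&\quad + \underbrace{\bigl(K_{\tilde F(\mbb X)} - K_{\hat F(\mbb X)}\bigr)^\top K_{\mbb X}^{-1}\bs\Psi(z)}_{\mf e_d}.
\end{align*}
The middle term is the exact interpolation-based Koopman approximation \eqref{eq:koop_ev} applied componentwise to $\bs\Psi$. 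We then bound each piece in turn.

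For $\mf e_s$, apply \Cref{lem:rkhs_lip} componentwise. The Lipschitz constant of $\Psi_\ell$ is $C_1\|\Psi_\ell\|_{\mc H}$, so
\[
\|\mf e_s\| \le C_1\|\bs\Psi\|_{\mc H}\,\|\mc Q F(\phi) - \mc Q F(\mc P\phi)\|_{b,\infty}.
\]
Next use $\tilde F(z) = \mc Q F \mc P(\phi)$, the fact that $\mc Q$ is nonexpansive from $\|\cdot\|_\infty$ to $\|\cdot\|_{b,\infty}$, Assumption~\ref{assm2}, and \Cref{lem1}(iii). Together these give $\|\mf e_s\| \le C_1 L_F L_\theta\|\bs\Psi\|_{\mc H}\,\delta(M)$, which is $R_s\delta(M)$.

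For $\mf e_i$, observe that the $\ell$-th entry is $(\Psi_\ell\circ\tilde F - \mc I_{\mbb X}(\Psi_\ell\circ\tilde F))(z)$, since $K_{\tilde F(\mbb X)}^\top K_{\mbb X}^{-1}\bs\Psi(z)$ interpolates $\Psi_\ell\circ\tilde F$ at the nodes $\mbb X$. The interpolation error vanishes on $\mbb X$. Pick $z_{\ell'}$ with $\|z - z_{\ell'}\|\le h_{\mbb X}$; then the error at $z$ is at most the Lipschitz constant of the difference times $h_{\mbb X}$. That constant has two parts:
\begin{itemize}
\item $\mrm{Lip}(\Psi_\ell\circ\tilde F) \le C_1\|\Psi_\ell\|_{\mc H} L_F$, using \Cref{lem1}(i) for $\mc R$ and the nonexpansiveness of $\mc Q$;
\item $\mrm{Lip}(\mc I_{\mbb X}g) \le C_1\|\mc I_{\mbb X} g\|_{\mc H}$, with $\|\mc I_{\mbb X} g\|_{\mc H}$ controlled by $\sqrt p\,\|K_{\mbb X}^{-1}\|\,\|g_{\mbb X}\|_\infty$-type bounds, where $|g(z_j)| \le C_0\|\Psi_\ell\|_{\mc H}$.
\end{itemize}
Summing over $\ell$ in $\ell^2$ produces $B_{\mbb X}\|\bs\Psi\|_{\mc H} h_{\mbb X}$. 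The norms $\|\cdot\|$ and $\|\cdot\|_{b,\infty}$ are compared with $\|\cdot\|_{b,\infty}\le\|\cdot\|$, so no extra factor appears.

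For $\mf e_d$, bound it by $\|K_{\tilde F(\mbb X)} - K_{\hat F(\mbb X)}\|\,\|K_{\mbb X}^{-1}\|\,\|\bs\Psi(z)\|$, with $\|\bs\Psi(z)\| \le C_0\|\bs\Psi\|_{\mc H}$. Each kernel entry is $L_{\mrm k}$-Lipschitz in its second argument, which gives a factor $p$ from the matrix norm times $\max_\ell L_{\mrm k}\|\hat F_\ell - \tilde F(z_\ell)\|$. The main obstacle is bounding the local regression error $\|\hat F_\ell - \tilde F(z_\ell)\|$. Write
\[
z^+_{\ell,j} = \tilde F(z_\ell) + D\tilde F(z_\ell)(z_{\ell,j}-z_\ell) + \varepsilon_{\ell,j}.
\]
Because \eqref{eq:edmd_ls} has the solution $\Phi_\ell = \mf Z^+_\ell\bar{\mf Z}_\ell^\dagger$ under \Cref{assm3}, and the exact affine part is reproduced exactly, we get $\Phi_\ell - [\tilde F(z_\ell)\ D\tilde F(z_\ell)] = \mc E_\ell\bar{\mf Z}_\ell^\dagger$. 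Hence
\[
\|\hat F_\ell - \tilde F(z_\ell)\| \le \|\mc E_\ell\|_F\,\bar Z \le \sqrt d\,\max_j\|\varepsilon_{\ell,j}\|\,\bar Z.
\]
Each $\varepsilon_{\ell,j}$ has two sources:
\begin{itemize}
\item the Taylor remainder of $\tilde F$, bounded by $\tfrac{D_F}{2}\rho^2$ via \Cref{lem:tildeF_hessian};
\item the closure mismatch $z^+_{\ell,j} - \tilde F(z_{\ell,j})$, bounded as in $\mf e_s$ by $L_F L_\theta\delta(M)$ per block, i.e.\ $\sqrt M L_F L_\theta\delta(M)$ in Euclidean norm.
\end{itemize}
Collecting factors gives $R_d\rho^2 + R_e\delta(M)$. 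Adding the three bounds yields \eqref{eq:det_bound}.
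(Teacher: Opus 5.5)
Your proposal is correct and follows the paper's proof essentially step for step. You use the same three-way split of $\mf r_\phi$: the closure/discretization term via $\tilde F(z)=\mc Q\circ F\circ\mc P(\phi)$ and Lemma~\ref{lem1}(iii); the interpolation error of $\Psi_\ell\circ\tilde F$, bounded by its Lipschitz constant times $h_{\mbb X}$; and the kernel-matrix perturbation, bounded through $\Phi_\ell-[\tilde F(z_\ell)\ \ D\tilde F(z_\ell)]=\mc E_\ell\bar{\mf Z}_\ell^\dagger$ with Taylor-remainder and closure-mismatch contributions. The one detail you leave implicit, which the paper states, is that $\mc Z$ is convex, so the segment $z_\ell+s(z_{\ell,j}-z_\ell)$ stays in $\mc Z$ and the Hessian bound of Lemma~\ref{lem:tildeF_hessian} applies along it.
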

\end{shadeboxdark}

\begin{proof}
    See  Appendix (\Cref{sec:proofs}).
\end{proof}
% Equation \eqref{eq:det_bound} provides a deterministic error bound for the kEDMD-based surrogate model \eqref{eq:surr}, decomposed into three terms: the state discretization error \(R_s\delta(M)\), the kernel interpolation error \(R_i(\|z\| + d_{\ds X})\), and the data-driven regression error \(R_d\rho^2\). The discretization term vanishes as \(M \to \infty\). If \(0 \in \ds X\), then \(\operatorname{dist}(z,\ds X) \le \|z\|\), so the interpolation error scales proportionally with \(\|z\|\). Since \(R_i\) depends on the fill distance \(h_{\ds X}\), we have \(R_i \to 0\) as \(h_{\ds X} \to 0\); in particular, under sufficiently dense sampling of \(\ds X\) over \(\mc Z\), this implies \(R_i \to 0\) as \(p \to \infty\). The regression error term decreases quadratically as \(\rho \to 0\), provided \Cref{assm3} remains satisfied. However, because \(\|K_{\ds X}^{-1}\|\) generally increases with \(p\) \cite{wendland2004scattered}, \(\rho\) must be chosen small enough to reduce regression error while preserving numerical conditioning.

Equation \eqref{eq:det_bound} provides a deterministic error bound for the
kEDMD-based surrogate \eqref{eq:surr}, comprising the history discretization
error $R_{\mrm{se}}\delta(M)$, kernel interpolation error $R_i h_{\ds X}$,
and local regression error $R_d\rho^2$. Increasing $M$ decreases the grid
spacing $\delta(M)$, while increasing $p$ can reduce $h_{\ds X}$ through
denser coverage of $\mc Z$. The regression error is quadratic in $\rho$,
provided \Cref{assm3} remains satisfied. Since $\|K_{\ds X}^{-1}\|$ may
increase with $p$ \cite{wendland2004scattered}, denser sampling can improve
interpolation while worsening the conditioning of $K_{\ds X}$.

\begin{shadeboxdark}
\begin{proposition}\label{prop1}
For any $z \in \mc Z$, the discretized state reconstructed from the lifted coordinates via kernel interpolation is given by
$
\hat z = \mf C K_{\ds X}^{-1}\bs \Psi(z),
$
where \(\mf C = [z_1\ \cdots\ z_p] \in \R^{nM \times p}\) with \(z_\ell \in \ds X\) for all \(\ell \in [1:p]\). Consequently, the one-step predicted state under the
data-driven surrogate model is
$
\hat z^+ = \mf C K_{\ds X}^{-1}\mf A \bs \Psi(z).
$
Moreover,
\eqnn{
 \|z^+ - \hat z^+\| \le B_p h_\ds{X}+ \|\mf C K^{-1}_\ds{X}\|\|\mf r_\phi\|, \label{eq:prop3}
}
where $B_p =  1+C_1 \|\mf C K_{\ds X}^{-1} \| \| \bs \Psi\|_{\mc H}$. 
\end{proposition}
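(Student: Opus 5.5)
The plan is to treat the reconstruction as kernel interpolation of the coordinate functions and then split the one-step prediction error into an interpolation part and a propagated residual part.

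\emph{Reconstruction formula.} For each coordinate $i\in[1:nM]$, let $e_i:\mc Z\to\R$ be the map $\zeta\mapsto \zeta^{(i)}$. By the definition of $\mc I_{\mbb X}$,
\[
(\mc I_{\mbb X} e_i)(z)=\sum_{\ell}\bigl(K_{\mbb X}^{-1}(e_i)_{\mbb X}\bigr)_\ell\Psi_\ell(z).
\]
Stacking these over $i$, and using the symmetry of $K_{\mbb X}$ together with the fact that $(e_i)_{\mbb X}$ is the $i$-th row of $\mf C$, gives
\[
\hat z:=\mc I_{\mbb X}(\mathrm{id})(z)=\mf C K_{\mbb X}^{-1}\bs\Psi(z).
\]
This is an identity on $\mc Z$, and it also interpolates exactly on $\mbb X$, since $\bs\Psi(z_\ell)=K_{\mbb X}e_\ell$ gives $\hat z=z_\ell$. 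Replacing $\bs\Psi(z^+)$ by the surrogate $\mf A\bs\Psi(z)$ yields $\hat z^+=\mf C K_{\mbb X}^{-1}\mf A\bs\Psi(z)$.

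\emph{Splitting the error.} By \eqref{eq:surr} in \Cref{thm1},
\[
z^+-\hat z^+=\bigl(z^+-\mf C K_{\mbb X}^{-1}\bs\Psi(z^+)\bigr)+\mf C K_{\mbb X}^{-1}\mf r_\phi .
\]
The second term is bounded by $\|\mf C K_{\mbb X}^{-1}\|\,\|\mf r_\phi\|$, which is the second summand in \eqref{eq:prop3}. The first term is the reconstruction error of the identity map evaluated at $z^+\in\mc Z$; this point lies in $\mc Z$ because $F(\phi)\in W$ by \Cref{assm1} and $\mc Q(W)\subseteq\mc Z$.

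\emph{Bounding the reconstruction error.} Pick $z_\ell\in\mbb X$ with $\|z^+-z_\ell\|\le h_{\mbb X}$. Exact interpolation at $z_\ell$ lets us write the first term as
\[
(z^+-z_\ell)-\mf C K_{\mbb X}^{-1}\bigl(\bs\Psi(z^+)-\bs\Psi(z_\ell)\bigr).
\]
The first piece has norm at most $h_{\mbb X}$, which produces the $1$ in $B_p$. For the second piece, the argument in the proof of \Cref{lem:rkhs_lip} gives, for each $\ell'$,
\[
|\Psi_{\ell'}(z^+)-\Psi_{\ell'}(z_\ell)|\le \|\Psi_{\ell'}\|_{\mc H}\,C_1\|z^+-z_\ell\|_{b,\infty}.
\]
Summing the squares over $\ell'$ gives
\[
\|\bs\Psi(z^+)-\bs\Psi(z_\ell)\|\le C_1\|\bs\Psi\|_{\mc H}\|z^+-z_\ell\|_{b,\infty}\le C_1\|\bs\Psi\|_{\mc H}h_{\mbb X},
\]
using $\|\cdot\|_{b,\infty}\le\|\cdot\|$. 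Hence the second piece is at most $C_1\|\mf C K_{\mbb X}^{-1}\|\,\|\bs\Psi\|_{\mc H}h_{\mbb X}$. Adding this to the first piece gives $B_p h_{\mbb X}$, and combining with the residual term from the split gives \eqref{eq:prop3}.

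\emph{Main obstacle.} The delicate point is whether the fill distance is measured in the Euclidean norm, as $h_{\mbb X}$ is defined, while $C_1$ is tied to the $\|\cdot\|_{b,\infty}$ Lipschitz constant. The inequality $\|\cdot\|_{b,\infty}\le\|\cdot\|$ makes these compatible, so no extra factor of $\sqrt M$ arises. The other step to check carefully is the symmetry and transpose bookkeeping in $\mf C K_{\mbb X}^{-1}\bs\Psi$.
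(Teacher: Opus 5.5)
Your proposal is correct and follows essentially the same route as the paper: coordinate-wise kernel interpolation to obtain $\hat z=\mf C K_{\ds X}^{-1}\bs\Psi(z)$, then the same split of $z^+-\hat z^+$ into the reconstruction error at $z^+$ plus $\mf C K_{\ds X}^{-1}\mf r_\phi$, and the same nearest-center triangle-inequality argument with the Lipschitz bound from \Cref{lem:rkhs_lip} to get $B_p h_{\ds X}$. Your explicit checks that $z^+\in\mc Z$, that the transpose and symmetry bookkeeping works, and that $\|\cdot\|_{b,\infty}\le\|\cdot\|$ reconciles $C_1$ with the Euclidean fill distance spell out steps the paper leaves implicit.
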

\end{shadeboxdark}
\begin{proof}
    See Appendix (\Cref{sec:proofs}).
\end{proof}

Proposition~1 shows that the discretized state can be reconstructed from the lifted coordinates via kernel interpolation, yielding a corresponding one-step state prediction. The bound \eqref{eq:prop3} separates the resulting error into the kernel reconstruction error and the lifted-surrogate residual propagated through $\mf C K_{\ds X}^{-1}$. Thus, prediction accuracy is governed jointly by kernel reconstruction and the residual bound in \Cref{thm1}.

\begin{remark}
Over the admissible horizon, the one-step lifted-surrogate bound in \Cref{thm1} extends to multiple steps, with residuals propagated through powers of $\mf A$. Likewise, the state-prediction bound in \Cref{prop1} extends to multiple steps, combining the accumulated lifted-surrogate error with the kernel-based reconstruction error.
\end{remark}

\section{Numerical Results}

To numerically validate the proposed Koopman linear surrogate model, we consider two systems: a scalar DDE with Hill-type nonlinearity \cite{glass2021nonlinear} ($\tau_d=1$) 
\eqnn{
\dot x(t) = \tfrac{1}{1+\left(x(t-\tau_d) \right)^2} -x(t),
\label{eq:hill}
}
and a tumor--immune interaction model \cite{rihan2014delay} ($\tau_d=1.636$)
\begin{subequations}\label{eq:tum}
\begin{align}
& \dot{x}_1(t)
=
0.0441
+ 0.6913\,\tfrac{x_1(t-\tau_d)\,x_2(t-\tau_d)}{1+x_2(t-\tau_d)} \notag
  - \\
& \qquad \quad 0.0384\,x_1(t-\tau_d)\,x_2(t-\tau_d)
- 0.2288\,x_1(t),\\
& \dot{x}_2(t)
=
 x_2(t)\bigl(1 - 0.04038\,x_2(t)\bigr)
- x_1(t)\,x_2(t).
\end{align}
\end{subequations}

In the numerical experiments, explicitly constructing the full theoretical domain $\mc Z$ from a priori bounds is generally impractical. We therefore simulate the DDEs using \texttt{JiTCDDE} \cite{ansmann2018efficiently}  over a finite horizon from an ensemble of constant initial history functions, construct the corresponding history states $x_t$, and sample them on $\Theta_M$ to obtain the discretized states $z(t)=\mc Q(x_t)$. These states define the empirically explored portion of $\mc Z$, over which numerical quantities such as the fill distance are evaluated. This empirical construction is used only for computation and does not alter the theoretical definition of $\mc Z$ in Section~II-B. We compute $h_\ds{X}$ for this empirical domain for a particular $p$ using the procedure in \cite{climaco2024on}.  The dataset $\mathcal T$ is constructed from the samples used to form $\mc Z$ and we use kEDMD procedure in \Cref{sec:kedmd} to form the surrogate model. We also draw 100 samples of independent test trajectory data (different from the ones used in the learning) from the samples forming $\mc Z$. 
Along each test trajectory $\{z_k\}_{k=0}^{N_t}$, we construct the predicted trajectory $\{\hat z_k\}_{k=0}^{N_t}$, initialized with $\hat z_0 = C K_{\ds X}^{-1}\bs \Psi(z_0)$. The prediction error at time step $k \in [1:N_t]$ for the $i$-th test trajectory is defined as $e_k^{(i)} = \|z_k^{(i)} - \hat z_k^{(i)}\|$. The step-wise \textit{mean prediction error} (MPE) on the $100$ test trajectories is then given by $\mu_{z_k} = \tfrac{1}{100} \sum_{i=1}^{100} e_k^{(i)}$. We used $\rho=0.6$ unless otherwise noted.

For the system \eqref{eq:hill}, \Cref{fig:one}(a) shows the evolution of $\mu_{z_k}$ along the test trajectories for different discretization levels $M$, with $p$ chosen to yield comparable fill distances, demonstrating consistently lower $\mu_{z_k}$ as $M$ increases ($h_\ds{X} \approx 0.0053$). Fixing $M=3$, \Cref{fig:one}(b) shows the variation of $\mu_{z_k}$ with $p$, with $\mu_{z_k}$ decreasing as $p$ increases, while \Cref{fig:one}(c) illustrates its dependence on $\rho$, where smaller $\rho$ further reduces $\mu_{z_k}$ for $p=500$. In \Cref{fig:two}, we compare the true current value of the state function, corresponding to the last component of the discretized state $z$ (i.e., at $\theta_M$), with the corresponding prediction obtained from the last component of $\hat z$. For the planar system \eqref{eq:tum}, \Cref{fig:two}(b)--(c) show close agreement between the true and predicted trajectories of $x_1(t)$ and $x_2(t)$ for $p=1500$ and $M=3$. Similarly, \Cref{fig:two}(a) shows close agreement for \eqref{eq:hill} with $p=100$ and $M=3$. These results demonstrate the effectiveness of the proposed kEDMD-based surrogate model in accurately reconstructing the discretized state and predicting the current state from the lifted coordinates.

% ---- end sections/4_numerical ----

\begin{figure*}[htpb!]
\centering
\subfloat[]{\includegraphics[trim={0 0 0 0}, clip=true, width=0.28\textwidth]{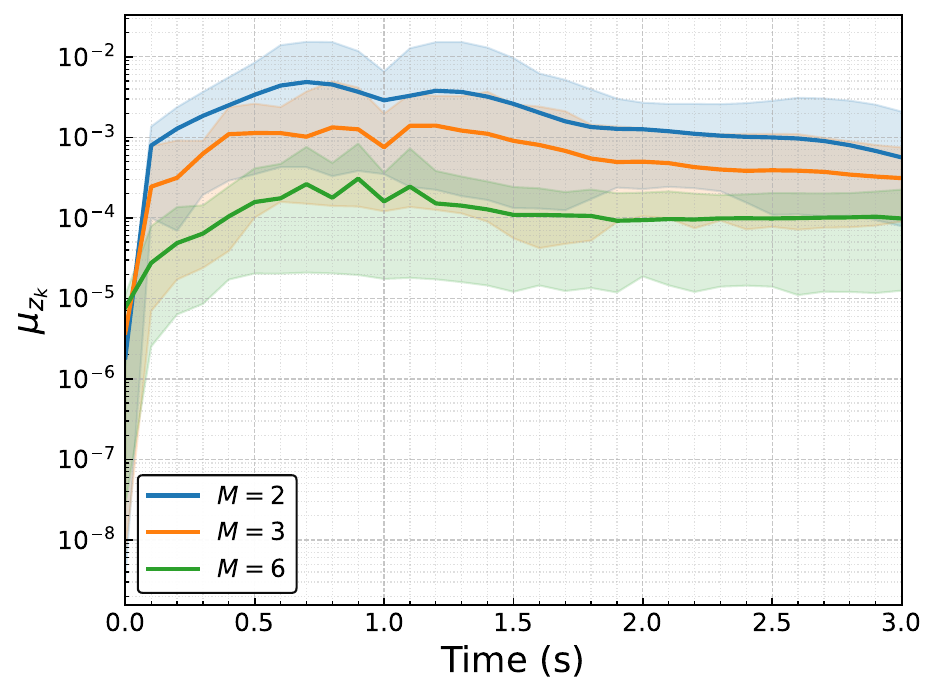} \label{fig:pred}}
\subfloat[]{\includegraphics[trim={0 0 0 0}, clip=true, width=0.28\textwidth]{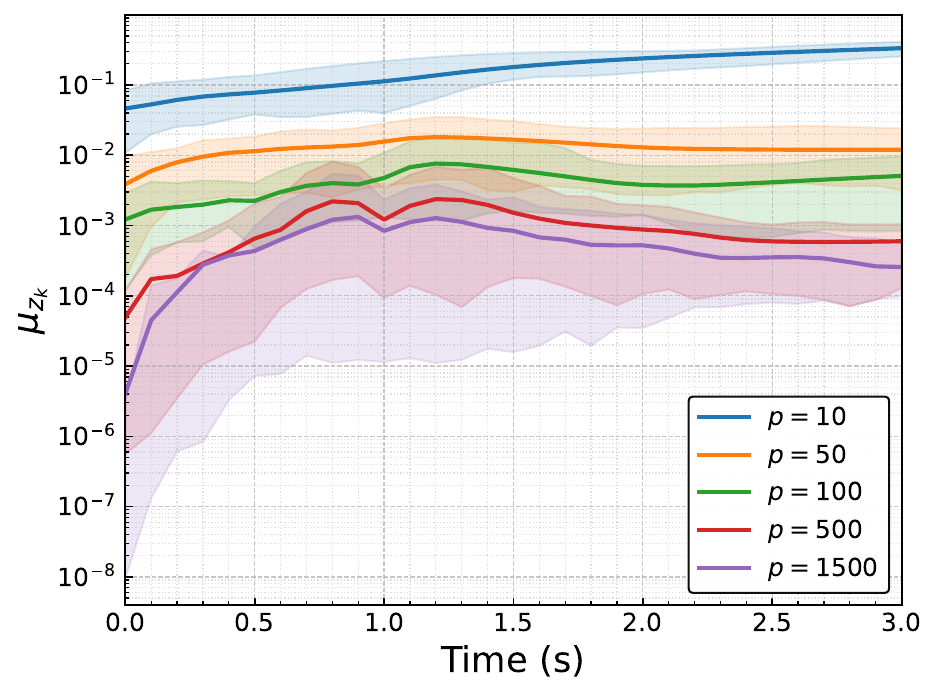} \label{fig:track}}
\subfloat[]{\includegraphics[trim={0 0 0 0}, clip=true, width=0.28\textwidth]{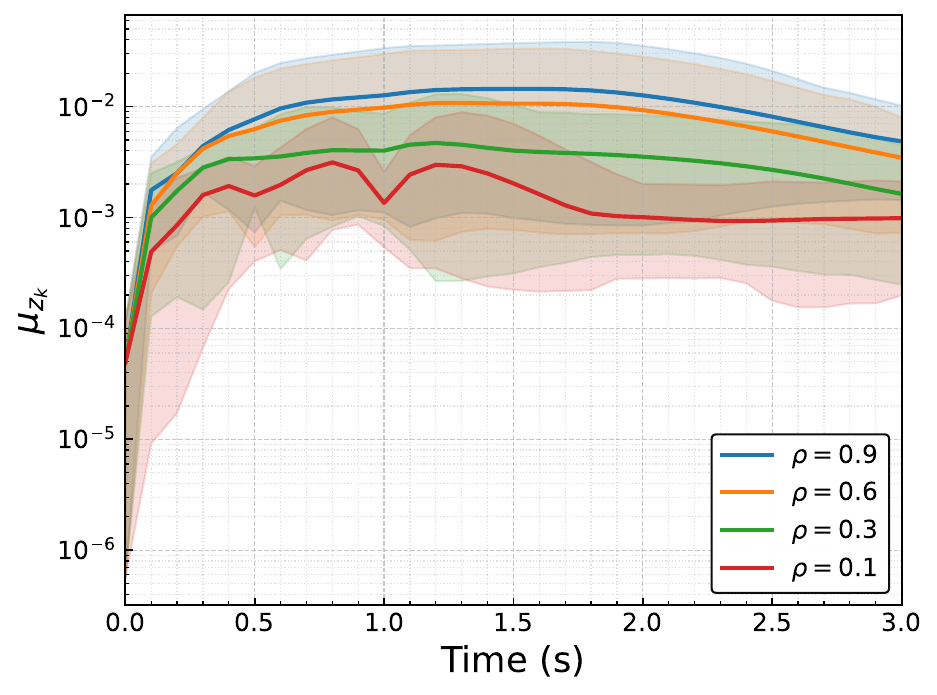} \label{fig:track}}
%\vspace{-1cm}
\caption{Step-wise MPE $\mu_{z_k}$ for \eqref{eq:hill}: (a) increasing $M$, (b) increasing $p$, (c) decreasing $\rho$; shaded area depicts the range from worst to best error.}
\label{fig:one}  \vspace{-1em}
\end{figure*}

\begin{figure*}[htpb!]
\centering
\subfloat[(pooled RMSE = 0.0026)]{\includegraphics[trim={0 0 0 0}, clip=true, width=0.28\textwidth]{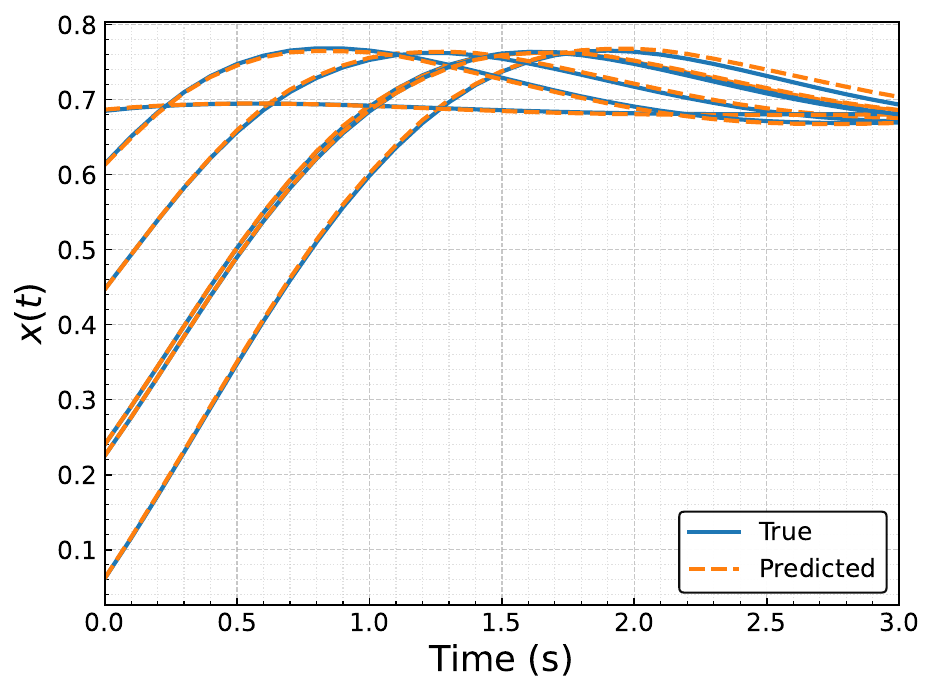} \label{fig:pred}}
\subfloat[(pooled RMSE = 0.025)]{\includegraphics[trim={0 0 0 0}, clip=true, width=0.28\textwidth]{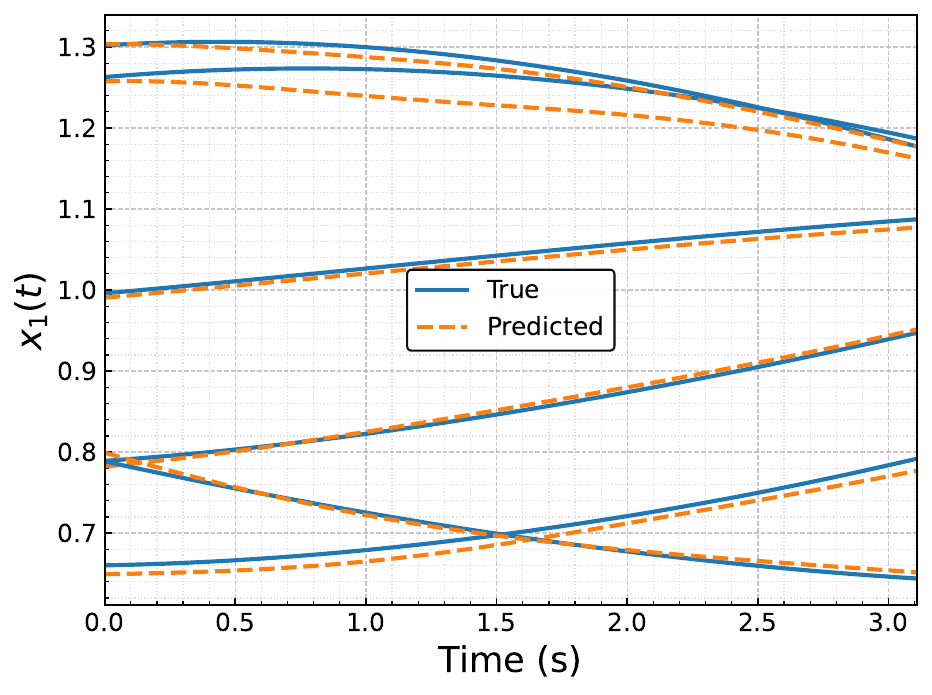} \label{fig:track}}
\subfloat[(pooled RMSE = 0.16)]{\includegraphics[trim={0 0 0 0}, clip=true, width=0.28\textwidth]{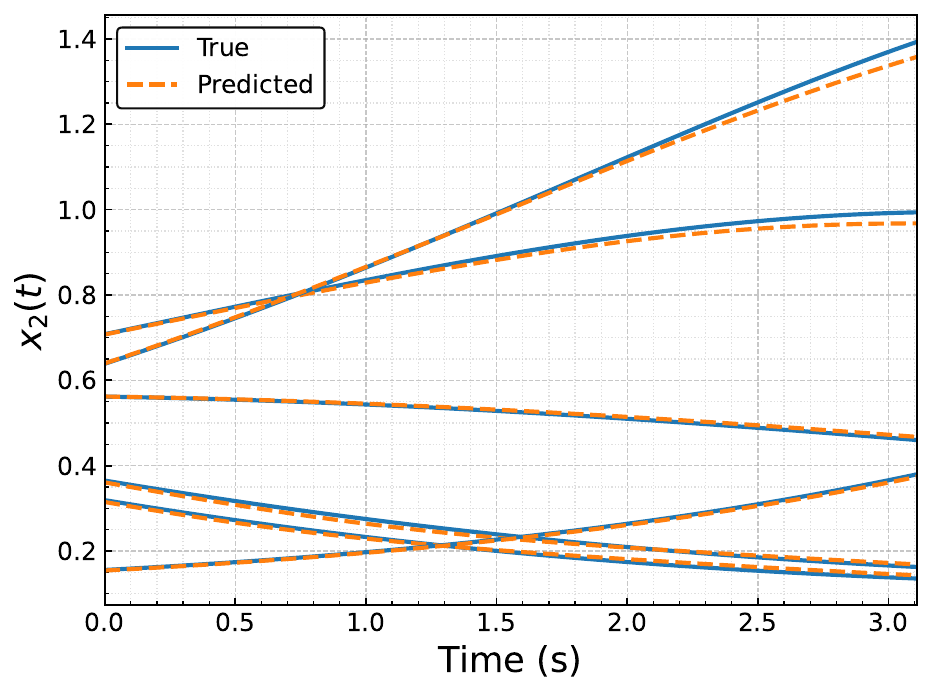} \label{fig:track}}
%\vspace{-1cm}
\caption{Evolution of current value of the history function: (a) $x(t)$ for system \eqref{eq:hill}, (b) $x_1(t)$ for system \eqref{eq:tum}, (c) $x_2(t)$ for system \eqref{eq:tum}.}
\label{fig:two}  \vspace{-1em}
\end{figure*}

\section{Conclusion}

This work develops a data-driven Koopman framework for nonlinear delay differential equations (DDEs), bridging infinite-dimensional delay dynamics with finite-dimensional representations via history discretization and reconstruction. Using kernel-based extended dynamic mode decomposition (kEDMD), we construct a linear surrogate model with deterministic error bounds that decompose the approximation error into discretization, kernel interpolation, and regression components, along with guarantees for state reconstruction. Numerical results demonstrate convergence with respect to discretization resolution and data density. Overall, the proposed framework provides a theoretically grounded foundation for learning and prediction of delay systems, with future directions including closed-loop control synthesis and extensions to high-dimensional networked systems.

% ---- begin sections/appendix ----

\section{Appendix}\label{appen} \label{sec:proofs}

\noindent \textbf{Proof of \Cref{lem1}:} \textup{(i)} Let $\zeta \in \mathcal{Z}$ and the reconstructed state function be $\tilde \eta = \mathcal{R}(\zeta)$. It follows from \eqref{eq:recon} that for any $\theta \in [\theta_j, \theta_{j+1}]$, $\tilde \eta (\theta)$ is a convex combination of $\zeta_j$ and $\zeta_{j+1}$. Since $\|\zeta_j\| \leq \gamma$ for all $j$, by convexity of the norm, $\|\tilde \eta(\theta)\| \leq \gamma$. Thus, $\sup_{\theta \in [-\tau_d,0]}\|\tilde \eta(\theta)\| = \| \tilde \eta \|_\infty \leq \gamma$. Moreover, on each interval $[\theta_j,\theta_{j+1}]$, $\tilde \eta$ is affine in $\theta$ and satisfies
$
\left\|\tfrac{\dx}{\dx\theta}\tilde \eta(\theta)\right\|
=
\tfrac{\|\zeta_{j+1}-\zeta_j\|}{\delta(M)}
\leq L_\theta.
$
Since $\tilde \eta$ is continuous and piecewise differentiable in $\theta$, it follows that
$
\|\tilde \eta(\theta_1)-\tilde \eta(\theta_2)\|
\leq
L_\theta |\theta_1-\theta_2|,
\qquad \forall \theta_1,\theta_2 \in [-\tau_d,0].
$
Therefore, $\tilde \eta \in W$. Since $\zeta \in \mathcal{Z}$ was arbitrary, we conclude that $\mathcal{R}(\mathcal{Z}) \subseteq W$.

Now let $\zeta,\xi\in\mc Z$. For any $\theta\in[\theta_j,\theta_{j+1}]$, write $\lambda=(\theta-\theta_j)/\delta(M)\in[0,1]$. Then
\[
\mc R(\zeta)(\theta)-\mc R(\xi)(\theta)
=
(1-\lambda)(\zeta_j-\xi_j)+\lambda(\zeta_{j+1}-\xi_{j+1}).
\]
By convexity of the norm,
\begin{align*}
 \|\mc R(\zeta)(\theta)-\mc R(\xi)(\theta)\|
{\scriptstyle \le
 (1-\lambda)\|\zeta_j-\xi_j\|+\lambda\|\zeta_{j+1}-\xi_{j+1}\|  \le 
\|\zeta-\xi\|_{b,\infty} }.
\end{align*}
Taking the supremum over $\theta\in[-\tau_d,0]$ yields
$
\|\mc R(\zeta)-\mc R(\xi)\|_\infty \le \|\zeta-\xi\|_{b,\infty}.
$

\textup{(ii)} By definition, $\mc Q(\mc R(\zeta))$ samples the function $\mc R(\zeta)$ at the grid points $\{\theta_j\}_{j=1}^M$. From the interpolation formula \eqref{eq:recon}, $\mc R(\zeta)(\theta_j) = \zeta_j$ for all $j \in[1:M]$. Thus, $\mc Q(\mc R(\zeta)) = \zeta$. Hence, the composition $\mc Q \circ \mc R$ is the identity on $\mc Z$.

% \textup{(iii)} Let $\eta \in W$. On any interval $[\theta_j, \theta_{j+1}]$, $\mc P(\eta)$ is the linear interpolant of $\eta$ at the endpoints. Since $\eta$ is Lipschitz continuous with constant $L_\theta$ (by Assumption \ref{assm1}), the maximum deviation between $\eta$ and its linear interpolant on an interval of width $\delta(M)$ is bounded by $L_\theta \delta(M)$. This follows from the standard theory of local polynomial reproduction, where the error scales with the fill distance \cite[Chapter 3]{wendland2004scattered}. Hence, $\|\eta - \mc P(\eta)\|_\infty \leq L_\theta \delta(M), \quad \forall \eta \in W.$

\textup{(iii)} By \textup{(i)}, $\mc P(W)\subseteq W$, and by \textup{(ii)},
$\mc P^2=\mc R\circ\mc Q\circ\mc R\circ\mc Q
=\mc R\circ\mc Q=\mc P$; hence, $\mc P$ is a projection on $W$. Let $\eta \in W$. For any $\theta \in [\theta_j,\theta_{j+1}]$, define $\lambda := (\theta-\theta_j)/\delta(M) \in [0,1]$. Then,
$
\mc P(\eta)(\theta)
=
(1-\lambda)\eta(\theta_j)+\lambda \eta(\theta_{j+1}).
$
Hence,
\begin{align*}
\|\eta(\theta)-\mc P(\eta)(\theta)\| &= {\scriptstyle
\|(1-\lambda)\bigl(\eta(\theta)-\eta(\theta_j)\bigr)
+  \lambda \bigl(\eta(\theta)-\eta(\theta_{j+1})\bigr)\|} \\
&\le
{\scriptstyle (1-\lambda)\|\eta(\theta)-\eta(\theta_j)\|
+ \lambda \|\eta(\theta)-\eta(\theta_{j+1})\|} \\
&\le
{\scriptstyle (1-\lambda)L_\theta |\theta-\theta_j|
+  \lambda L_\theta |\theta-\theta_{j+1}|}.
\end{align*}
Using $|\theta-\theta_j|=\lambda \delta(M)$ and $|\theta-\theta_{j+1}|=(1-\lambda)\delta(M)$, we obtain
$
\|\eta(\theta)-\mc P(\eta)(\theta)\|
\le
2\lambda(1-\lambda)L_\theta \delta(M)
\le
L_\theta \delta(M).
$
Taking the supremum over $\theta \in [-\tau_d,0]$ yields
$
\|\eta - \mc P(\eta)\|_\infty \le L_\theta \delta(M),
$
which completes the proof. \hfill $\blacksquare$

\vspace{0.5em}

\noindent \textbf{Proof of \Cref{lem:tildeF_hessian}:} 

Let $U\subset\mathscr C$ be an open neighborhood of $W$ on which $F$ is twice continuously Fr\'echet differentiable, and let $\overline{\mc Q}:\mathscr C\to\mathbb R^{nM}$ and $\overline{\mc R}:\mathbb R^{nM}\to\mathscr C$ denote the bounded linear extensions of $\mc Q$ and $\mc R$, respectively. Since $\overline{\mc R}(\mc Z)=\mc R(\mc Z)\subseteq W\subset U$ by Lemma~\ref{lem1}, the set $V:=\overline{\mc R}^{-1}(U)$ is an open neighborhood of $\mc Z$. Hence, $\overline{\widetilde F}:=\overline{\mc Q}\circ F\circ\overline{\mc R}:V\to\mathbb R^{nM}$ is twice continuously Fr\'echet differentiable. Moreover, $\mc R(\mc Z)\subseteq W$, $F(W)\subseteq W$ by Assumption~\ref{assm1} and $\Delta\le t_f$, and $\mc Q(W)\subseteq\mc Z$; therefore, the restriction of $\overline{\widetilde F}$ to $\mc Z$ coincides with $\widetilde F:\mc Z\to\mc Z$.

Since $\mc Z$ is closed and convex with nonempty interior, $\mc Z=\overline{\operatorname{int}\mc Z}$. Hence, the second derivative of any twice continuously Fr\'echet differentiable extension of $\widetilde F$ is uniquely determined on $\mc Z$: any two such extensions have identical second derivatives on $\operatorname{int}\mc Z$, and therefore on $\mc Z$ by continuity. We may thus write $D^2\widetilde F(z):=D^2\overline{\widetilde F}(z)$ for $z\in\mc Z$.

For any $z\in\mc Z$ and $h_1,h_2\in\mathbb R^{nM}$, the second-order chain rule and the linearity of $\overline{\mc Q}$ and $\overline{\mc R}$ yield $D^2\widetilde F(z)[h_1,h_2]=\overline{\mc Q}\!\left(D^2F(\mc R(z))[\overline{\mc R}(h_1),\overline{\mc R}(h_2)]\right)$. Furthermore, $\|\overline{\mc Q}(\eta)\|\le\sqrt{M}\|\eta\|_\infty$ for every $\eta\in\mathscr C$, while the interpolation formula defining $\overline{\mc R}$ gives $\|\overline{\mc R}(h)\|_\infty\le\|h\|_{b,\infty}\le\|h\|$ for every $h\in\mathbb R^{nM}$. Therefore, Assumption~\ref{assm2} implies $\|D^2\widetilde F(z)[h_1,h_2]\|\le\sqrt{M}\,\bar D_F\|h_1\|\|h_2\|$. Taking the supremum over $\|h_1\|\le1$ and $\|h_2\|\le1$ gives $\|D^2\widetilde F(z)\|_{\mathcal L^2(\mathbb R^{nM},\mathbb R^{nM})}\le\sqrt{M}\,\bar D_F=D_F$. \hfill $\blacksquare$

\vspace{0.5em}

\noindent \textbf{Proof of Theorem 1:}
Let $\what{\bs \Psi}(\cdot) := \bs \Psi \circ \mc Q(\cdot) = [ \Psi_1 \circ \mc Q(\cdot) \ \cdots \  \Psi_p \circ \mc Q(\cdot)]^\top $. We want a rigorous error bound for 
\begin{align}
   \!\!\! &\bs \Psi(z^+)-\mf A\bs \Psi(z) = \bs \Psi(\mc Q(F(\phi)))- \mf A\bs \Psi(z) \notag \\
    \!\!\! &= \mc K \what{\bs \Psi}(\phi) - \tilde{\mc K} \bs \Psi(z) + \tilde{\mc K} \bs \Psi(z)-\mf A\bs \Psi(z) \notag \\
    \!\!\! & = \what{\bs \Psi}(F(\phi)) - \bs \Psi(\tilde F(z)) + \tilde{\mc K} \bs \Psi(z)-\mf A\bs \Psi(z) \label{eq:err_decomp}\\
   \!\!\!  &= \what{\bs \Psi}(F(\phi)) - \what{\bs \Psi}(F \circ \mc P(\phi)) + \tilde{\mc K} \bs \Psi(z)- \mc I_{\ds X}\tilde{\mc K} \bs \Psi(z) \notag \\
   \!\!\!  & \qquad \qquad \qquad \qquad \qquad + \mc I_{\ds X}\tilde{\mc K} \bs \Psi(z) - \mf A\bs \Psi(z).\notag
\end{align}

For each $\ell \in [1:p]$, define $g_\ell(\cdot) := \tilde{\mc K}\Psi_\ell(\cdot) = \Psi_\ell \circ \tilde F(\cdot)$. For any $z_1,z_2 \in \mc Z$, using \Cref{lem1}, we obtain $\| \tilde F(z_1)-\tilde F (z_2) \|_{b,\infty} \le L_F \| z_1-z_2\|_{b,\infty}$. Since $\Psi_\ell \in \mc H$, \Cref{lem:rkhs_lip} gives $\mrm{Lip}_\mc{Z}(\Psi_\ell) \le C_1 \| \Psi_\ell \|_\mc{H}$. Therefore, $|g_\ell(z_1)-g_\ell(z_2)| \le C_1L_F \| \Psi_\ell \|_\mc{H}\| z_1-z_2\|_{b,\infty}$. Hence, $\mrm{Lip}_\mc{Z}(g_\ell) \le C_1 L_F \| \Psi_\ell\|_\mc{H}$. Now, define
\begin{align}
    \alpha^{(\ell)} := K_{\ds X}^{-1} (g_\ell)_{\ds X}, \ (g_\ell)_{\ds X} := [g_\ell(z_1)\ \cdots \ g_\ell(z_p)]^\top.
    \label{eq:alpha_ell}
\end{align}
Using results from \Cref{lem:rkhs_lip}, we obtain $\| (g_\ell)_{\ds X}\| \le \sqrt{p} C_0 \| \Psi_\ell\|_\mc{H}$, and consequently $\| \alpha^{(\ell)}\| \le \sqrt{p} C_0 \| K_{\ds X}^{-1}\| \| \Psi_\ell\|_\mc{H}$. By definition, $\mc I_{\ds X} g_\ell = {\scriptstyle \sum}_{m=1}^p \alpha^{(\ell)}_m \Psi_m$, giving
\begin{align*}
    \mrm{Lip}_\mc{Z} (\mc I_{\ds X} g_\ell) & \le C_1 \| \alpha^{(\ell)}\| \| \bs \Psi\|_\mc{H} \\
    & \le C_1 \sqrt{p} C_0 \| K_{\ds X}^{-1}\| \| \Psi_\ell\|_\mc{H} \| \bs \Psi\|_\mc{H}.
\end{align*}
Therefore,
$
\mrm{Lip}_\mc{Z} (g_\ell-\mc I_{\ds X} g_\ell) \le B_\ds{X} \| \Psi_\ell\|_\mc{H},
$
where $B_\ds{X} = C_1(L_F+ \sqrt{p} C_0 \| K_{\ds X}^{-1}\| \| \bs \Psi\|_\mc{H})$. Hence,
\begin{align}
    \| \tilde{\mc K} \bs \Psi(z) - \mc I_{\ds X} \tilde{\mc K} \bs \Psi(z) \| & \le B_\ds{X} \mrm{dist}(z, \ds X) \| \bs \Psi \|_\mc{H} \notag\\
    & \le B_\ds{X} \| \bs \Psi \|_\mc{H} h_\ds{X}
    \label{eq:ri}
\end{align}

%  Note that $\mf A$ can be thought of as a perturbed matrix representation of $\mc I_{\ds X}\tilde{\mc K}$ in the canonical basis given by $\{ \Psi_\ell \}_{\ell=1}^p$, with $\mc I_{\ds X}$ is the orthogonal projection onto $\mrm{span}\{ \Psi_1, \cdots, \Psi_p \}$.  Using results in \cite[Section 3.2]{bold2025kernel}, \cite[Section 3]{wendland2004scattered} and \cite{strasser2025kernel}, there exists $B_1, h_0 >0$ such that for $\ell \in[1:p]$ and $ h_\ds{X} \le h_0$
% \begin{align}
% |\tilde{\mc K}  \Psi_\ell(z)- \mc I_{\ds X}\tilde{\mc K}\Psi_\ell(z)| \leq B_1 h_\ds{X}^{s-\tfrac{1}{2}} \mrm{dist}(z, \ds X) \|\Psi_\ell \|_\mc{H}. 
% \end{align}
% The constants $B_1$ and $h_0$ depend only on the domain $\mc Z$ \cite{wendland2004scattered}. Also, $\mrm{dist}(z,\ds X) \le \| z\| + d_{\ds X}$. Therefore,
% \begin{align}
% |\tilde{\mc K}  \Psi_\ell(z)- \mc I_{\ds X}\tilde{\mc K}\Psi_\ell(z)| \leq B_1 h_\ds{X}^{s-\tfrac{1}{2}} \|\Psi_\ell \|_\mc{H} (\| z\|+d_\ds{X}).
% \end{align} 
% With $\| \bs \Psi \|_\mc{H}$ as in the definition \cite[c.f., Sec III]{strasser2025kernel}, we obtain 
% \begin{align}
% \| \tilde{\mc K} \bs \Psi(z)- \mc I_{\ds X}\tilde{\mc K} \bs \Psi(z) \| \le  B_1 h_\ds{X}^{s-\tfrac{1}{2}} \|\bs \Psi \|_\mc{H} (\| z\|+d_\ds{X}).
% \label{eq:ri}
% \end{align}
Note that $\mc I_\ds{X} \tilde{\mc K}\bs \Psi(z) \! = \!  K^\top_{\tilde F(\ds X)} K_{\ds X}^{-1} \bs \Psi(z)$. Let $\mf A^\star = K^\top_{\tilde F(\ds X)} K_{\ds X}^{-1}$. Then 
\begin{align}
   &\| \mc I_{\ds X}\tilde{\mc K} \bs \Psi(z) - \mf A\bs \Psi(z) \| = \|(\mf A^\star-\mf A)\bs \Psi(z)\|, \notag\\
   & \qquad \qquad  \le C_0 \|\bs \Psi \|_\mc{H} \| K^\top_{\tilde F(\ds X)}-K^\top_{\hat F(\ds X)}\| \| K_{\ds X}^{-1} \|. \label{eq: fin_err} 
\end{align}
For $j\in[1:d]$, define $e_{\ell,j}:= z^+_{\ell,j}- \tilde F(z_{\ell,j})$. Since $z_{\ell,j}= \mc Q(\phi_{\ell,j})$ for the corresponding true history $\phi_{\ell,j}$, $e_{\ell,j} = \mc Q \circ F(\phi_{\ell,j})- \mc Q \circ F\circ \mc P (\phi_{\ell,j})$. Therefore, using \Cref{assm2} and \Cref{lem1},
\begin{align}
\| e_{\ell,j} \| {\scriptstyle \le \sqrt{M} L_F \| \phi_{\ell,j}- \mc P(\phi_{\ell,j}) \|_\infty \le \sqrt{M} L_FL_\theta\delta(M)}.
\label{eq:e_lj}
\end{align}
Now, the Taylor expansion of $\tilde F$ about $z_\ell$, evaluated at a neighboring point $z_{\ell,j}$ yields
\begin{align}
    \tilde F(z_{\ell,j})
    =
    \tilde F(z_\ell)
    +
    J_{\tilde F}(z_\ell)\,\delta_{\ell,j}
    + r_{\ell,j},
\end{align}
where $\delta_{\ell,j} = z_{\ell,j}-z_\ell$, and
\begin{align*}
    r_{\ell,j} = {\scriptstyle \int_0^1}(1-s) D^2 \tilde F(z_\ell+s \delta_{\ell,j})[\delta_{\ell,j},\ \delta_{\ell,j}]ds.
\end{align*}
Because $\mathcal Z$ is convex, $z_\ell+s\delta_{\ell,j}\in\mathcal Z$ for all $s\in[0,1]$, so  \Cref{lem:tildeF_hessian} applies along the entire segment. Consequently,
\begin{align}
    \|r_{\ell,j} \| \le \tfrac{D_F}{2} \| \delta_{\ell,j}\|^2.
    \label{eq:r_lj}
\end{align}

Let $\Phi^\star_\ell = [\tilde F(z_\ell) \ \ J_{\tilde F}(z_\ell)]$ and $\Phi_\ell = \mf Z^+_\ell \bar{\mf Z}_\ell^\dagger$ be the solution to \eqref{eq:edmd_ls}. Let $\mf R_{\ell} = [r_{\ell,1} \ \cdots r_{\ell,d}]$ and $\mf E_\ell = [e_{\ell,1} \cdots e_{\ell,d}]$. Then 
\eqnn{
\mf Z^+_\ell = \Phi^\star_\ell\bar{\mf Z}_\ell + \mf R_{\ell}+\mf E_\ell.  \text{ Hence, }
}
\eqnn{
\Phi_\ell = (\Phi^\star_\ell\bar{\mf Z}_\ell + \mf R_{\ell}+\mf E_\ell)\bar{\mf Z}_\ell^\dagger = \Phi^\star_\ell + (\mf R_{\ell} + \mf E_\ell)\bar{\mf Z}_\ell^\dagger,
}
which gives
\eqnn{\| \tilde F(z_\ell) - \hat{F}_\ell\| = \|(\Phi_\ell-\Phi^\star_\ell) e_1\| \le \|\mf R_\ell + \mf E_\ell\| \| \bar{\mf Z}_\ell^\dagger \|,
}
where $e_1 = [1 \ \mf 0_{1\times nM}]^\top$. Let
$
\max_{j\in[1:d]} \|\delta_{\ell,j}\| \le \rho
$. Then, it follows from \eqref{eq:e_lj} and \eqref{eq:r_lj} that
\begin{align}
    \| \tilde F(z_\ell) - \hat{F}_\ell\| \le \sqrt{d}\bar Z \big( \tfrac{D_F}{2} \rho^2 + \sqrt{M}L_F L_\theta\delta(M) \big).
\end{align}
Since the Wendland kernel $\mrm k(\cdot,\cdot)$ has a continuously differentiable compactly supported radial profile, it is globally Lipschitz. Hence, there exists $L_{\mrm k} >0$, depending only on the kernel, such that
\[
|\mathrm{k}(\zeta,y_1) - \mathrm{k}(\zeta,y_2)|
\le
L_{\mrm k} \|y_1 - y_2\|,
\quad \forall \zeta \in \mc Z,\; \forall y_1,y_2 \in \R^{nM}.
\]
Let 
$\Delta K \coloneqq K^\top_{\tilde F(\ds X)} - K^\top_{\hat F(\ds X)}$ and denote by $\Delta K(:,\ell)$ its $\ell$-th column. 
Then, for all $\ell \in [1:p]$, we have
\begin{align}
\|\Delta K(:,\ell)\| \le \sqrt{dp}  L_{\mrm k} \bar Z \big( \tfrac{D_F}{2} \rho^2 + \sqrt{M}L_F L_\theta\delta(M) \big),
\end{align}
and with  $\| \Delta K \| \le \| \Delta K \|_F$, we have
\begin{align}
    \| \Delta K \| \le \sqrt{d} p L_{\mrm k} \bar Z \big( \tfrac{D_F}{2} \rho^2 + \sqrt{M}L_F L_\theta\delta(M) \big).
\end{align}
Therefore, it follows from \eqref{eq: fin_err} that
\begin{align}
  \| \mc I_{\ds X}\tilde{\mc K} \bs \Psi(z) - \mf A\bs \Psi(z) \|  \le R_d \rho^2 + R_e \delta(M),
  \label{eq:rd}
\end{align}
where $R_d=  C_0 \sqrt{ d} p  L_{\mrm k}\bar Z \tfrac{D_F}{2} \| \bs \Psi\|_\mc{H} \| K_{\ds X}^{-1} \| $ and $R_e =  C_0 \sqrt{ dM} p L_{\mrm k} L_F L_\theta\bar Z \|\bs \Psi\|_\mc{H} \| K_{\ds X}^{-1} \| $. Now, for each $\ell\in[1:p]$, using results from \Cref{lem:rkhs_lip}, we obtain
\begin{align*}
&\big|\widehat{\Psi}_\ell(F(\phi)) - \widehat{\Psi}_\ell(F\circ \mc P(\phi))\big| %\mrm{Lip}_\mc{Z}(\what \Psi)| F(\phi)-F \circ \mc P(\phi)|\\
 \le C_1\,\|\Psi_\ell\|_{\mathcal H} L_F \| \phi-\mc P(\phi) \|_\infty\\
& \qquad \qquad \qquad \qquad \qquad \le C_1\,\|\Psi_\ell\|_{\mathcal H}\,L_F\,L_\theta\,\delta(M).
\end{align*} 
Therefore, 
\begin{align}
\| \what{\bs \Psi}(F(\phi)) - \what{\bs \Psi}(F \circ \mc P(\phi)) \| \le R_s \,\delta(M),
\label{eq:rs}
\end{align}
where $R_s = C_1\ L_F\,L_\theta\,\|\bs \Psi\|_{\mathcal H}$. With $\mf r_\phi \! := \!  \bs \Psi(z^+) - \mf A \bs \Psi(z)$ and 
$R_i = B_\ds{X} \|\bs \Psi\|_{\mc{H}}$, 
it follows from \eqref{eq:err_decomp}, together with \eqref{eq:ri}, \eqref{eq:rd}, and \eqref{eq:rs}, that
$
\| \mf r_\phi \| \le   R_{\mrm{se}} \delta(M) + R_i h_{\ds X} + R_d  \rho^2,
$
with $R_{\mrm{se}} = R_s + R_e$. This completes the proof. \hfill $\blacksquare$ 

% \subsection{Density-Aware Farthest Point Sampling (DA-FPS)}

% Consider the data cloud $\mc D = \{ z_i \mid (z_i,z_i^+) \in \mc T \}_{i=1}^N$ and let $\ds X \subset \mc D$. Consider the weighted fill distance 
% \eqnn{
% h_{\ds X, \mc D} = \max_{z\in \mc D} \left( \omega_{\ds X}(z) \min_{\zeta \in \ds X}\| z-\zeta \|_2 \right),
% }
% where $1 \le \omega_{\ds X}(z) \le d$ denotes the local $d$-neighbor density weight. Let $\ds X^\star \in \arg\min_{\ds X \subset \mc D} h_{\ds X, \mc D}$ be an optimal set of $p$ data points and let $\widehat{\ds X} \subset \mc D$ be the set returned by DA-FPS algorithm \cite[Algorithm 1]{climaco2026densityaware}. Then, $h_{\widehat{\ds X}, \mc D} \le 2d\, h_{\ds X^\star, \mc D}$.

\vspace{0.5em}

\noindent \textbf{Proof of Proposition 1:} 
Let $z = [g_1(z)\ \cdots \ g_{nM}(z)]^\top$, where $g_r:\mc Z \to \R$, with $r \in[1:nM]$, denotes the $r$-th coordinate projection. We represent the kernel interpolant of each coordinate function $g_r$ in  $\mrm{span}\{\Psi_1,\cdots,\Psi_p \}$ by
\eqnn{
\hat g_r(z) = {\scriptstyle \sum}_{\ell=1}^p \alpha_{r\ell} \,\Psi_\ell(z) = \alpha_r^\top \bs \Psi(z),
}
where $\alpha_r := [\alpha_{r1} \ \cdots \ \alpha_{rp}]^\top$. The coefficients are chosen to satisfy 
$
\hat g_r(z_m) = g_r(z_m), \quad \forall z_m \in \ds X,\; m \in [1:p].
$
Stacking the coordinate-wise interpolants therefore gives
\eqnn{
\hat z = [\hat g_1(z)\ \cdots \ \hat g_{nM}(z)]^\top = \mf C K_{\ds X}^{-1}\bs \Psi(z),
}
where $\mf C$ is as defined in the statement of the proposition. 
Let $\mc G(\cdot) := \mf C K_{\ds X}^{-1}\bs\Psi(\cdot) $. By the interpolation property, $\mc G(z_\ell)=z_\ell$ for all $z_\ell\in\ds X$, $\ell \in [1:p]$. Hence, for any $\zeta\in\mc Z$, choosing a nearest center $z_\ell\in\ds X$ and using \Cref{lem:rkhs_lip},
\begin{align}
  \! \! \!  \| \zeta - \mc G(\zeta) \| \le \|\zeta -z_\ell \| +\| \mc G(z_\ell)- \mc G(\zeta) \|   \le B_p h_\ds{X},
  \label{eq:Bp}
\end{align}
where $B_p =  1+C_1 \|\mf C K_{\ds X}^{-1} \| \| \bs \Psi\|_{\mc H}$.
Now, 
\begin{align}
z^+ - \hat z^+
&= z^+ - \mf C K_{\ds X}^{-1}\mf A \bs\Psi(z) \notag \\
&={\scriptstyle \bigl(z^+ -\ \mf C K_{\ds X}^{-1}\bs\Psi(z^+)\bigr) + \mf C K_{\ds X}^{-1}\bigl(\bs\Psi(z^+) - \mf A \bs\Psi(z)\bigr) }.
\label{eq:zp_hatz}
\end{align}
Applying \eqref{eq:Bp} at $z^+ \in \mc Z$ and using the residual bound of \Cref{thm1} in \eqref{eq:zp_hatz} gives \eqref{eq:prop3}. \hfill $\blacksquare$ 
% ---- end sections/appendix ----

%% ---- inlined bibliography (was: bibtex on bibl.bib with IEEEtran.bst) ----
% Generated by IEEEtran.bst, version: 1.14 (2015/08/26)

%% ---- end bibliography ----
%\input{sections/10_appendix}
%\input{sections/app_temp}

\end{document}